\newtheorem{thm}{Theorem}[section]
\newtheorem{lem}{Lemma}[section]
\newtheorem{rem}{Remark}[section]
\newcommand{\R}{\mathbb{R}}
\newcommand{\E}{\mathbb{E}}
\newcommand{\N}{\mathbb{N}}
\newcommand{\ve}{\varepsilon}
\begin{document}

\title[A game theoretic model of wealth distribution]{A game theoretic model of wealth distribution}
\author[J. P. Pinasco, M. Rodriguez Cartabia, N. Saintier]
{Juan Pablo Pinasco, Mauro Rodriguez Cartabia, Nicolas Saintier}%

\keywords{wealth distribution, evolutionary game dynamics, computer simulations
  \\ \indent   {\it PACS numbers}: 89.65.Gh,89.75.Da,05.20.-y, ams  91A22    91B69      91B60 }%
\begin{abstract}
In this work we   consider an agent based model in order to study the wealth distribution
problem where the interchange is determined with a symmetric zero sum game. Simultaneously, the
agents update their way of play trying to learn the optimal  one. Here, the agents use mixed
strategies. We study this model using both simulations and theoretical tools. We derive the equations for the learning mechanism,  and we show  that
the mean strategy of the population satisfies an equation close to the classical replicator equation.

Concerning the wealth distribution, there are two interesting situations depending on the equilibrium of the game. If the equilibrium is a pure
strategy, the wealth distribution is fixed after some transient time, and those players which are close to optimal strategy are richer. When the game
has an equilibrium in mixed strategies, the  stationary wealth distribution is close to a Gamma distribution with
variance  depending on the coefficients of the game matrix. We compute theoretically their
second moment in this case. \end{abstract}

\maketitle


\section{Introduction}

The stability through time and societies of the wealth distribution in a population has
attracted a lot of attention since Pareto's seminal work \cite{Pareto}. Different empirical
studies show that in many economies the distribution of wealth follows a Pareto distribution
(a power law) in the high-income range, and a Gibbs, a Gamma or a log-normal distribution in
the bulk-range. We refer the reader to Chapter 2 in \cite{CCCC} for a review of the empirical
data available.

These empirical results has given rise to theoretical efforts trying to explain this remarkable
universality of the wealth distribution. In particular, recent advances in computer technology
allowed  the study of this phenomenon computationally by using agent-based models. The main
idea consists in consider that the distribution of wealth across the population results from
simple repeated interactions between individuals. This is reminiscent of the kinetic theory of
gases where the particles composing some gas interact binary at random exchanging energy
at each interaction, and, as a result of the interactions, the systems tends to some stationary
state as time increases. The analogy consists then in thinking of a population as composed
of many agents, and when two agents interacts, they exchange money following some
pre-established rule. The wealth distribution is expected to approach some stationary state,
 which is expected to look similar as the wealth distribution empirically
observed if the interaction rule is appropriately chosen. There are many
numerical and theoretical studies which shows that this kind of models can reproduce the
 curves observed in wealth distribution, in particular the Pareto tail and the exponential
or Gamma-like bulk-distribution, see for instance  Chapter 4 in \cite{CCCC} for a detailed review of
different models.

\bigskip

Two main criticism of these models inspired by and analyzed with statistical mechanics
arguments comes from the fact that agents are considered as identical dummy particles and
hence they are not acting rationally trying to improve their wealth.  Also, they interact always
in the same way and thus they cannot modify or adapt their behaviour. This is in a sharp
contrast with economical models, where  game theory provides a powerful framework to
study interactions between rational agents through games in which the players can choose
among different actions or strategies.

\bigskip
In this article we propose  a model of wealth distribution taking advantage of both the
simplicity of the statistical physics models,
 and the
flexibility of evolutionary game theory. This approach to conflict resolution in the animal
kingdom was proposed in the late '70s by John Maynard Smith, and it enables the players to
modify its behavior trying to adapt their way of play against the strategies of others players,
see for instance \cite{F,H,S} for comprehensive introductions in this subject.

In our model, agents interact pairwise, playing some zero sum game, and the outcome of the
game defines the money exchange between them. More general games requires some
renormalization, since money is not conserved, as in \cite{PaTo}. Also, the agents update
their strategies based on this outcome, following some adaptive mechanism that we
introduce below.

We assume  that each player has a mixed strategy, and they will increment or decrease
 the probability to play some strategies according to the result of the
last game they played. This is a Pavlovian type of agent, introduced by D. and V. Kraines in
\cite{Kraines}, which has a limited forecast of the game, and react myopically. Despite that
this kind of behavior has been used in many works, specially through simulations, see
\cite{Bou, Silva, Szi}, a theoretical analysis is lack, and we provide here the corresponding
dynamics whenever finitely many agents are involved. This strategic update has an
independent interest, and we study here their main properties. In particular, we show that
the mean strategy of the population obeys an equation close to the replicator equation
introduced by Taylor and Jonker \cite{TJ} (see also \cite{Hofb, Ze}).

\bigskip
We present both theoretical and computational arguments which confirms   that this model
reproduce the main features of the observed wealth distribution. Also, by scaling the game
payoff we obtain an analogue of the {\it saving propensity}: let us recall that in \cite{vercual},
a family of parameters $\{\lambda_i\}_{1\le i\le N}$ is introduced, and player $i$ is willing to
exchange a fraction $\lambda_i w_i$ of  its wealth, without exposing  the remaining
$(1-\lambda_i) w_i$, see  the review \cite{Patriarca} for a detailed analysis of the influence
of saving propensities on the shape of the equilibrium wealth distribution. As we will show in
the paper, the payoff scaling influence the wealth distribution in similar ways, by changing
the mean and variance of the equilibrium distributions.

Finally, let us remark that previous models considered some gambling process associated to
the money exchange process,
 see \cite{167, 259, 270}. However, the outcome of the random game was externally determined, and agent's expectations
 were always the same, since they cannot improve their chances in the game.

\bigskip

Our paper is organized as follows. In Section \S 2 we describe the model, and in Section \S 3
we present the theoretical results on the evolutionary mechanism.  In Section \S 4 we present
some examples of this dynamic. Section \S 5 is devoted to the wealth  distribution, and we
compare the theoretical predictions with our computer simulations. We close the paper with
some conclusions and comments in Section \S 6.

\section{Description of the model}

Let us consider a toy model of a closed economic system where the total amount of
money and the total number $N$ of agents are fixed. No changes occur in those variables: no production, neither migration,
nor death or birth of agents,  and the only economic activity is restricted to trading.
We denote the wealth of agent  $i$ by $w_i$, which is always a nonnegative real number.

\subsection{Mechanism of a trade}

Let us first describe the trade mechanism. Trades   take place between two
agents. They play a  game whose payoff matrix $G$ will indicate how wealth is transferred
between them. We consider only  zero-sum games, and therefore the gain of one agent is the
loss of the other. This will ensure that the wealth is conserved. The winner will receive some
fraction of the wealth of the other, this
 fraction depends on the
game and the strategies selected by the players.

\subsubsection{The game}

We assume we have a symmetric zero sum game. Each agent has a common set of $K$ possible actions (or
pure strategies) labeled $s_1, \ldots, s_K$. The
game matrix $G = (g_{lm}) \in \R^{K\times K}$ has entries $g_{lm}\in[-1,1]$, and $g_{lm}=-g_{ml}$ for
any $1\le l,m\le K$. Let us note that this is not a serious restriction, since any zero sum game
can be extended to a symmetric game, where each agent is randomly assigned as the row or column player.

Agent $i$ has a private probability distribution $p^{(i)}=(p^{(i)}_{1},\ldots,p^{(i)}_{K})$ on
the actions, where $p^{(i)}_{l}$ is the probability that agent $i$ chooses the pure strategy
$s_l$, with $1\le l \le K$. We have $0\le p^{(i)}_{l}\le 1$ for any $1\le l \le K$ and
$p^{(i)}_{1}+...+p^{(i)}_{K}=1$ for any agent $i$.

In each time step, two agents $i$ and $j$ are randomly selected with uniform probability, and
they play some strategies  selected with their respective private probabilities distributions.

\subsubsection{Wealth interchange}

 Let us assume that $i$ plays $s_l$, and $j$ plays $s_m$. Then, $i$ will receive the fraction
$g_{lm}$ of the wealth $w_j$ of $j$ if $g_{lm}\ge 0$; otherwise $i$ will give to $j$
the fraction $g_{ml}$ of its own wealth $w_i$.

Hence, the post-interaction wealth $w_i^*$ and $w_j^*$ are defined as follows:
\begin{equation}\label{UpDatePayOff}
 \begin{cases} w_i^* = w_i +   \Delta    \\
                 w_j^*=w_j -   \Delta
		\end{cases}
 \quad \textit{where } \Delta := g_{lm}^+w_j - g_{lm}^-w_i.
\end{equation}
Here, $x^+:=\max\{x,0\}$ and $x^-:=\max\{-x,0\}$ .

Observe that $g_{lm}\in[-1,1]$ implies that the wealth of each agent remains
non-negative, while  condition $G^T=-G$ implies that the trade mechanism
is symmetric in the sense that the role of the players is unimportant. In particular, the
diagonal elements of $G$ are $0$, and if both players choose the same strategy  then there
are no money transfer.

\bigskip
A more realistic model must introduce some degree of risk aversion. An usual approach  \cite{vercual, Patriarca} is to assign
a {\it saving propensity} $\lambda_i \in [0,1]$ to each player, and in the trade   player $i$ is willing to exchange only
the fraction $\lambda_i w_i$ of  its wealth. Here, we consider this situation by  re-scaling  the game using a parameter $\ve\in [0,1]$, and
we have
$$
 \begin{cases} w_i^* = w_i +   \Delta_\ve   \\
                 w_j^*=w_j -   \Delta_\ve
		\end{cases}
 \quad \textit{where } \Delta_\ve:=\ve(g_{lm}^+w_j - g_{lm}^-w_i).
$$
This have two possible interpretations: it is equivalent to change the game matrix to $\ve G$, obtaining a less dangerous game since
lower fractions of wealth are at risk; or each player is exposing the fraction
$\ve w$ of its wealth. Of course, $\ve$ can be randomly selected from some distribution in each trade; and also each player
can have its own value $\ve$.

\subsection{Adaptive process}

  We assign to each player an initial probability vector $p^{(i)}=(p_1^{(i)},\cdots, p_K^{(i)})$, which correspond to its mixed strategy,
   and let us fix some small value $\delta>0$.

We assume that all the agents will update  their mixed strategies  $\{p^{(i)}\}_{1\le i \le N}$ trying to increase the pay-off in
future trades. Now, the Pavlovian character of the agents means that  according to the outcome of the
game they will slightly increase (respectively, decrease) by $\delta$ the probability of
playing the successful (resp., unsuccessful) strategy used in that game.

More precisely, we have the following updating rules:
 {\it whenever agents $i$ and $j$ play strategies $s_l$ and $s_m$ respectively, and $\Delta$ is given in equation
 \eqref{UpDatePayOff},
  }
\begin{equation}\label{UpDateProba1}
\begin{array}{l}
\textit{if $\Delta >0$ then }  \begin{cases}
  \delta^{(i)} = \min\{\delta, p^{(i)}_m \}  \\ 
 \delta^{(j)} = \min\{\delta,  p^{(j)}_m \}  \\
  p^{(i)*}_{l}= p^{(i)}_{l}+\delta^{(i)},  \qquad p^{(i)*}_{m}  = p^{(i)}_{m} -\delta^{(i)}  \\
p^{(j)*}_{l} =                     p^{(j)}_{l}+\delta^{(j)},\qquad  p^{(j)*}_{m} = p^{(j)}_{m}-\delta^{(j)}, \\
                        		\end{cases}
\\
\\
\textit{if $\Delta <0$ then }
 \begin{cases}
  \delta^{(i)} = \min\{\delta, p^{(i)}_l \}  \\
 \delta^{(j)} = \min\{\delta, p^{(j)}_l\}  \\  p^{(i)*}_{l}= p^{(i)}_{l}-\delta^{(i)},  \qquad  \, p^{(i)*}_{m} = p^{(i)}_{m} +\delta^{(i)}  \\
p^{(j)*}_{l} =    p^{(j)}_{l}-\delta^{(j)},\qquad \,  p^{(j)*}_{m}  = p^{(j)}_{m}+\delta^{(j)}.\\
		\end{cases}
\end{array}
\end{equation}

 Clearly, $\delta^{(i)}=\delta$ unless the winning (resp., loosing) strategy has probability $1$ (resp., $0$), because in these
 cases the update procedure fails to give a true probability vector.

\medskip

In order to simplify the simulations, we can fix some positive  number $M\in \N$, and
now we choose  $\delta=1/M$. If we assign to each player an initial probability vector $p^{(i)}$  such that their
components are integer multiples of $\delta$, the update process is simpler, since $\delta=\delta^{(i)}=\delta^{(j)}$ unless the
player strategy has reached the values $0$ or $1$.

\subsection{The algorithm}

The pseudo code in Algorithm 1 can be easily implemented in any programming language.
We have implemented it and run our simulations in GNU Octave \cite{Oct}.

Let us note that Step 1 can be replaced by choosing one of the agents sequentially from $1$
to $N$, and selecting only the second agent at random. The results in both cases  are strongly similar.

\begin{algorithm}
\KwData{\\ $K\in\N$: number of actions or pure strategies, \\
$G\in\R^{K\times K}$: antisymmetric matrix with coefficients in $[-1,1]$.\\
$N\in\N$: number of agents. \\
$T\in\N$: total number of trades. \\
$\delta>0$ small (to be used in \eqref{UpDateProba1}) \\   } {\bf Initialization:} for each
agent $i=1,\ldots,N$, initialize its wealth $w_i\ge 0$ and its mixed strategy $p^{(i)}$. \\ \For{$t=1,\ldots,T$} { 1. Select uniformly at
random two agents $i$ and $j$.  \\ 2. For agent $i$, choose at random a pure strategy
following its probabilities $p^{(i)}_{1},\ldots ,p^{(i)}_{K}$.   \\ 3. Do the same for agent $j$.
\\4. Update their wealth $w_i$ and $w_j$ following \eqref{UpDatePayOff}.  \\ 5. Update their
probabilities following \eqref{UpDateProba1}.   } \KwResult{ final wealth $w_i$ and
mixed strategy $p^{(i)}$  of each agent $i\le i\le N$. } \caption{pseudo code
\label{BasicModel}}
\end{algorithm}

\section{Evolutionary game theory}

Let us  fix some generic player $i$, and let us study the evolution of the mixed strategy
$p^{(i)}(t)$.

For notational convenience, let us introduce the  mean strategy of the population,
$$ \bar p = N^{-1} \sum_{i} p^{(i)}.$$
Also, let us introduce a matrix $H=(h_{lm})_{1\le l,m\le K}$, depending on $G$ as follows:
$$
h_{lm}=\left\{\begin{array}{rcl}
1 & \; & \mbox{ if } g_{lm}>0 \\
0 & \; & \mbox{ if } g_{lm}=0 \\
-1 & \; &  \mbox{ if } g_{lm}<0 \\
\end{array}\right.
$$
Let us note that only the sign of $\Delta$  matters in the update rule given in equation
\eqref{UpDateProba1}. Incidentally, it will have important consequences on the convergence
of the mean strategy.

Let us assume that a trade occurs following a Poisson distribution with parameter $\lambda=1$.
So, the probability that an interaction occurs in the time interval $(t,t+dt)$ is $1-e^{-dt}$. We  compute now the expected  change of the
probability $p_l^{(i)}$ that player $i$ assigns to  action $s_l$ in its mixed strategy.

Observe that player $i$ is selected with
probability $2/N$ (it is the first or the second player), and it plays the pure strategy $s_n$ with
probability $p^{(i)}_n(t)$. Another arbitrary player $j$ is selected at random with probability $(N-1)^{-1}$,
and it will play the pure strategy $s_m$ with probability $p^{(j)}_m(t)$.
After the game, the probability $p^{(i)}_l(t)$ is changed by $\delta^{(i)}h_{lm}$ when $i$ plays strategy $l$, or
by $-\delta^{(i)}h_{nl}$ if $i$ plays a different strategy and the rival plays $s_l$.

Recall that $\delta^{(i)}$ is allowed to be zero to prevent a value outside $[0,1]$, depending on
the values of and $p^{(i)}_l(t)$ and $p^{(i)}_m(t)$;
we will omit the supraindex for notational simplicity.

We get the following master equation, where $e_l^T$ denotes the transposed $l$-th
canonical vector:
\begin{align*}
p^{(i)}_l(t+dt)= & p^{(i)}_l(t) + 2 dt \delta  N^{-1}\Big[  \sum_{m\neq l}  p^{(i)}_l(t) h_{lm}
   \Big( \frac{1}{N-1}\sum_{j\neq i}p^{(j)}_m(t)  \Big) \Big] \\
  &  - 2 dt \delta N^{-1}\Big[  \sum_{n\neq l}  p^{(i)}_n(t) h_{nl}   \Big(\frac{1}{N-1}\sum_{j\neq i}p^{(j)}_l(t)  \Big) \Big]
\\=& p^{(i)}_l(t) + 2dt \delta  N^{-1} p^{(i)}_l(t) e_l^T H \Big(\frac{1}{N-1}\sum_{j\neq i}p^{(j)}(t)  \Big)\\
  &  - 2 dt \delta N^{-1}   p^{(i)} (t) H e_l \Big(\frac{1}{N-1}\sum_{j\neq i}p^{(j)}_l(t)  \Big),
\end{align*}
where we have used that $h_{ll}=0$ in the last step.

Therefore, when  $dt \to 0^+$,
and rescaling time in order to get rid off the term $2 \delta/(N-1)$, we can rewrite it as

$$\frac{d p^{(i)}_l }{dt}=
p^{(i)}_l  e_l^T H \Big(   \bar p  - \frac{1}{N}  p^{(i)} \Big)  - p^{(i)} H e_l\Big(  \bar p_l  - \frac{1}{N}p^{(i)}_l\Big).
$$

Moreover,   $p^{(i)}_l e_l^T H p^{(i)} =  \sum_m  p^{(i)}_l h_{lm}p_m^{(i)}
 = -\sum_m p^{(i)}_l h_{ml} p_m^{(i)} = -p^{(i)} H e_l p^{(i)}_l$
 since $H$ is antisymmetric, and we have,
for  $1\le l\le K$,
 \begin{equation}\label{nordinaria}
\frac{d p^{(i)}_l }{dt}=
p^{(i)}_l  e_l^T H   \bar p  -  p^{(i)} H e_l  \bar p_l  - \frac{2}{N} p^{(i)}_l e_l^T H p^{(i)}.
\end{equation}

We have obtained a system of nonlinear coupled differential equations governing the
probabilities updates for the mixed strategy of each player. Let us mention that, when  $N\to \infty$,
the previous equations can be simplified to
 \begin{equation}\label{ordinaria}
\frac{d p^{(i)}_l}{dt}=
  p^{(i)}_l(t) e_l^T H   \bar p(t)
 -  p^{(i)}(t) H e_l  \bar p_l(t).
\end{equation}

 Despite the ugly aspect of   Eqns. \eqref{nordinaria} and  \eqref{ordinaria}, they can be solved
explicitly  in several interesting cases. As an example, we will study later two prototypical cases,  $2\times
2$ games, and the rock-paper-scissors game.

\bigskip
However, as we will see later in the analysis of the steady state of the wealth distribution problem, the
individual probabilities $p^{(i)}$ are unimportant,
 and the relevant information came
from their mean value $\bar p(t)$. So, we   need to derive a system of ordinary equations for $\bar p(t)$.  We have the following result:

\begin{thm} 
Let $\bar p$ be the   mean strategy of the population. Then   $\bar p$ satisfies the following equation
 \begin{equation}\label{repli}
\frac{d \bar p_l}{dt}  =    2 \bar p_l e_l^T H \bar p  -
2\bar p^T H\bar p  - \frac{2}{N^2} \sum_i p^{(i)}_l e_l^T H p^{(i)}.
\end{equation}
\end{thm}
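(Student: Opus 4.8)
The plan is to obtain \eqref{repli} directly from the individual equation \eqref{nordinaria} by averaging over the population. Since $\bar p_l = N^{-1}\sum_i p^{(i)}_l$ and the sum is finite, differentiation commutes with the average, so that $\frac{d\bar p_l}{dt} = N^{-1}\sum_i \frac{dp^{(i)}_l}{dt}$, and it suffices to average the three terms on the right-hand side of \eqref{nordinaria} one at a time.

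First I would treat the two linear terms. In the term $p^{(i)}_l e_l^T H \bar p$ the factor $e_l^T H \bar p$ does not depend on $i$, so averaging pulls it out and leaves $\bar p_l e_l^T H \bar p$. In the term $-p^{(i)} H e_l\, \bar p_l$ the factor $\bar p_l$ is independent of $i$ while $p^{(i)} H e_l$ is linear in $p^{(i)}$, so its average is $\bar p^T H e_l$, giving $-\bar p_l\, \bar p^T H e_l$. The key observation is that $H^T=-H$, whence $\bar p^T H e_l = e_l^T H^T \bar p = -e_l^T H \bar p$, which turns the second contribution into $+\bar p_l e_l^T H \bar p$. Adding the two pieces yields $2\bar p_l e_l^T H \bar p$, the first term of \eqref{repli}.

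It remains to handle the quadratic self-interaction term $-\frac{2}{N}p^{(i)}_l e_l^T H p^{(i)}$. Because it is quadratic in the individual strategy, its average is $-\frac{2}{N^2}\sum_i p^{(i)}_l e_l^T H p^{(i)}$, which is exactly the last term of \eqref{repli}; this term does not reduce to a function of $\bar p$ alone and is the genuine finite-$N$ correction that makes the dynamics only \emph{close to}, rather than equal to, the replicator equation \eqref{ordinaria}. To display the result in the familiar replicator form I would then insert the term $-2\bar p^T H \bar p$, which is harmless since $x^T H x = 0$ for every vector $x$ when $H$ is antisymmetric, so in particular $\bar p^T H \bar p = 0$. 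Collecting the three pieces produces \eqref{repli}. The main obstacle is really just careful bookkeeping with the antisymmetry of $H$, which is used twice, once to merge the two linear terms and once to justify the cosmetic $-2\bar p^T H \bar p$ term; the one conceptual point to flag is that the quadratic term fails to close on $\bar p$, so the average of \eqref{nordinaria} is not a self-contained equation for $\bar p$ except in the limit $N\to\infty$.
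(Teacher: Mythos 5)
Your argument is correct and follows essentially the same route as the paper: average \eqref{nordinaria} over the population, use the antisymmetry of $H$ to merge the two linear terms into $2\bar p_l e_l^T H \bar p$, keep the quadratic self-interaction as the explicit finite-$N$ sum, and add the vanishing term $-2\bar p^T H\bar p$ since $\bar p^T H \bar p=0$. No gaps.
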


\begin{proof}
For any $N$ fixed, we have
$\bar p = N^{-1}\sum_i p^{(i)}$ and
\begin{align*}
\frac{d \bar p_l}{dt} = & N^{-1} \sum_i \frac{d p^{(i)}_l}{dt} \\
=&  N^{-1} \sum_i
  \left[
p^{(i)}_l  e_l^T H   \bar p  -  p^{(i)} H e_l  \bar p_l  - \frac{2}{N} p^{(i)}_l e_l^T H p^{(i)} \right]\\
=&
\bar p_l  e_l^T H   \bar p  -  \bar p H e_l  \bar p_l  - \frac{2}{N^2} \sum_i p^{(i)}_l e_l^T H p^{(i)}.\\
\end{align*}

Now, using that $H^T=-H$, we have
\begin{align*}
\bar p_l  e_l^T H   \bar p = & \sum_m \bar p_l h_{lm} \bar p_m,\\
 \bar p H e_l  \bar p_l = & \sum_m \bar p_m h_{ml} \bar p_l=-\sum_m \bar p_m h_{lm} \bar p_l,
 \end{align*}
and we get
$$
\frac{d \bar p_l}{dt} =
2
\bar p_l  e_l^T H   \bar p - \frac{2}{N^2} \sum_i p^{(i)}_l e_l^T H p^{(i)} .
$$

We can rewrite the equation as
 $$
 \frac{d \bar p_l}{dt} =
2
\bar p_l  e_l^T H   \bar p - 2\bar p^T H \bar p - \frac{2}{N^2} \sum_i p^{(i)}_l e_l^T H p^{(i)} ,$$
using, as before, that $ \bar p^T H \bar p= \sum_{l}\sum_m
\bar p_l \bar p_m h_{lm}=0 $ since $h_{lm}=-h_{ml}$,
 and the proof is finished.
\end{proof}

\begin{rem}
Let us observe that, when $N\to \infty$, and after rescaling time to cancel the factor
$2$, we obtain that $\bar p(t)$ satisfy the {\it replicator equation},
\begin{equation}\label{replicator}
\frac{d \bar p_l}{dt} = \bar p_l  e_l^T H \bar p   -   \bar p^TH \bar p.
\end{equation}
As a consequence, the vast literature in this equation
enable us to characterize the
 convergence of $\bar p$
to a precise vector which correspond to a Nash equilibrium in several interesting cases.
 However, let us recall that there are fixed points of equation
\eqref{replicator} which are not Nash equilibria, and not every Nash equilibrium is an
asymptotically stable equilibrium,
see \cite{CresTao} for a short recent survey.
\end{rem}

\section{Two particular games and finite size effects}

Let us analyze the dynamics of the previous section for $2\times 2$ games and the family of generalized rock-paper-scissors games.

\subsection{$2\times 2$ games} \label{subdospordos}

\,

Let us suppose that we have a nontrivial game with only two strategies. There are
two possible matrices $H$ associated to the game, let us assume that
$$
H=\left(\begin{array}{rr} 0 & 1 \\ -1 & 0\end{array}\right),
$$
the other one is $-H$, and the results can be easily adapted for this case.

For large $N$, using Equation \eqref{ordinaria} we have
$$
\frac{dp_1^{(i)}}{dt} =
  p^{(i)}_1(t) (1- \bar p_1(t))
 +(1- p^{(i)}_1(t) )    \bar p_1(t),
$$
Let us note that the equation for $p^{(i)}_2$ is not necessary  since $p^{(i)}_2=1-p_1^{(i)}$.

This problem has only two equilibria: $p^{(i)}=e_1$ for every player $i$, or  $p^{(i)} =e_2$ for every player $ i$. Also,
a simple perturbation analysis shows that only $p^{(i)}=e_1$ is stable.

In fact, near the equilibria $\bar p= e_2$, if we set $p_1^{(i)} \approx \varepsilon_i$,
then
$$
\frac{dp_1^{(i)}}{dt}  \ge
  \varepsilon_i  (1- a)
 +(1- \varepsilon_i )   a   >0
$$
for  $a= N^{-1}\sum \varepsilon_i \approx 0$,
and then $p_1^{(i)}$ increases.

\subsubsection{Finitely many players, $N>3$}
If we consider the finite size effect of the population, Equation \eqref{nordinaria}
reads
\begin{equation}\label{dospordos}
\frac{dp_1^{(i)}}{dt} =
 p^{(i)}_1 (1- \bar p_1)
 +(1- p^{(i)}_1 )    \bar p_1 -\frac{2}{N}  p^{(i)}_1 (1- p^{(i)}_1 )
\end{equation}
 Again, the equilibria are the same as before, 
 although the stability analysis is slightly harder.

If we call $p=(p_1^{(1)}, \cdots, p_1^{(N)})$, we have 
$$
\frac{dp}{dt} = F(p),
$$
where $F:\R^N\to \R^N$, where each component of $F$ is given by the right hand side of Equation \eqref{dospordos}.
In order to prove that  the  equilibrium $p=0$ is unstable, we linearize this system at the equilibrium point
and we show that all the eigenvalues of $DF(0)$
have strictly positive real part.

   Let us observe
the diagonal coefficients of $DF(0) $ are given by
$$
DF(0)_{ii} = 1 - \frac{1}{N}
$$
while the remaining coefficients are $
DF(0)_{ij}=1/N$ for $i\neq j$.

Being a symmetric matrix, we know that the eigenvalues are real.  This matrix cannot
have a negative eigenvalue, and we can give a shorter theoretical argument using that
$$
DF(0) = \left(1-\frac{2}{N}\right) Id + \frac{1}{N}  \,  1s
$$
where $Id$ is the identity matrix, and $1s$ is a matrix of all ones.

\begin{lem}
The smaller eigenvalue of $DF(0)$ is greater than $1-2/N$.
\end{lem}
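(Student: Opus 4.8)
The plan is to read off the spectrum of $DF(0)$ directly from the decomposition $DF(0) = \left(1-\tfrac{2}{N}\right)Id + \tfrac{1}{N}\,1s$ already exhibited, since the two summands are simultaneously diagonalizable. First I would observe that the all-ones matrix factors as $1s = \mathbf{1}\mathbf{1}^T$, where $\mathbf{1}=(1,\dots,1)^T\in\R^N$; being a rank-one symmetric matrix of trace $N$, it is positive semidefinite and its spectrum is $\{N,0,\dots,0\}$, the eigenvalue $N$ belonging to $\mathbf{1}$ and the eigenvalue $0$ having multiplicity $N-1$ on the hyperplane $\mathbf{1}^{\perp}$.

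Since $Id$ is a multiple of the identity it commutes with $1s$, so the two matrices share a common eigenbasis. Consequently the eigenvalues of $DF(0)$ are obtained by applying the affine map $\lambda\mapsto \left(1-\tfrac{2}{N}\right)+\tfrac{1}{N}\lambda$ to the spectrum of $1s$: the eigenvector $\mathbf{1}$ yields $\left(1-\tfrac{2}{N}\right)+\tfrac{1}{N}\cdot N = 2-\tfrac{2}{N}$, while every vector in $\mathbf{1}^{\perp}$ yields $\left(1-\tfrac{2}{N}\right)+\tfrac{1}{N}\cdot 0 = 1-\tfrac{2}{N}$, with multiplicity $N-1$. Hence the smallest eigenvalue of $DF(0)$ is $1-\tfrac{2}{N}$; because $N>3>2$ it is strictly positive, so all eigenvalues have positive real part and the equilibrium $p=0$ is unstable, as desired. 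Equivalently, one may bypass the eigenvectors and argue through the Rayleigh quotient: for any unit vector $u$ one has $u^T DF(0)\,u = \left(1-\tfrac{2}{N}\right)+\tfrac{1}{N}(u^T\mathbf{1})^2 \ge 1-\tfrac{2}{N}$, and the variational characterization of the smallest eigenvalue then gives the bound at once.

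The computation is routine; the one point that deserves care is the strict-versus-nonstrict nature of the inequality. The Rayleigh estimate becomes an equality precisely when $u\perp\mathbf{1}$, so the bound $1-\tfrac{2}{N}$ is attained and the smaller eigenvalue in fact \emph{equals} $1-\tfrac{2}{N}$ rather than strictly exceeding it. I would therefore read the statement as asserting that $1-\tfrac{2}{N}$ is a lower bound for the whole spectrum (equivalently, the smaller eigenvalue is at least $1-\tfrac{2}{N}$); what is genuinely used downstream is the \emph{positivity} of this quantity for $N>2$, which is exactly what forces instability of the pure-strategy equilibrium $\bar p = e_2$.
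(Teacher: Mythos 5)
Your proof is correct, and it actually goes a step further than the paper. The paper's own argument is exactly your second, ``Rayleigh quotient'' route: it writes $\lambda_1=\inf_{\|v\|=1}v^TDF(0)v$, splits the infimum over the decomposition $DF(0)=(1-\tfrac{2}{N})Id+\tfrac{1}{N}\,1s$, and uses positive semidefiniteness of $1s$ to conclude $\lambda_1\ge 1-\tfrac{2}{N}$. Your first route --- diagonalizing $1s=\mathbf{1}\mathbf{1}^T$ and reading off the full spectrum $\{2-\tfrac{2}{N}\}\cup\{1-\tfrac{2}{N}\ (\text{mult. }N-1)\}$ --- is sharper: it shows the bound is attained on $\mathbf{1}^{\perp}$, so the smallest eigenvalue \emph{equals} $1-\tfrac{2}{N}$ and the lemma's strict inequality should really be a non-strict one (the paper's own proof likewise only establishes $\ge$, since the two infima in its chain are achieved at different vectors only when $v\perp\mathbf{1}$, where equality holds throughout). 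You are right that this does not matter downstream: what the instability argument needs is that every eigenvalue of $DF(0)$ has strictly positive real part, and $1-\tfrac{2}{N}>0$ for $N>2$ delivers exactly that.
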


\begin{proof}  The smaller eigenvalue $\lambda_1$ of $DF(0)$ can be characterized as
$$
\lambda_1 = \inf_{\{v\in \R^N : \|v\|=1\}} v^TDF(0)v
$$
Let us note that $Id$ is strictly definite positive, and $1s$ is
semidefinite positive, so
\begin{align*}
\lambda_1 = & \inf_{\{v\in \R^N : \|v\|=1\}} v^TDF(0)v \\
\ge & \left(1-\frac{2}{N}\right) \inf_{\{v\in \R^N : \|v\|=1\}}  v^T Idv +   \frac{1}{N} \inf_{\{v\in \R^N : \|v\|=1\}} v^T 1s v \\
\ge & \left(1-\frac{2}{N}\right),
\end{align*}
and the Lemma is proved.
\end{proof}

\bigskip
A similar argument, linearizing at $p=1$, gives that this is an stable equilibrium.

Therefore,
the players learn the optimal strategy  of the game, $p_1^{(i)}=1$.

\subsubsection{Rock-paper-scissors}\label{rps}

  Let us consider this popular  game, which is a very important example in evolutionary game theory.
The generalized  rock-paper-scissors game is defined through the following matrix,
$$
G=\left(\begin{array}{rrr} 0 & -a & b\\ b& 0& -a\\ -a & b & 0\end{array}\right),
$$
with $a$, $b >0$. The only Nash equilibrium  is $(1/3, 1/3, 1/3)$, and the solutions of the replicator equation  converge, diverge or rotate depending on the relationship
between $a$ and $b$, see Chapter 7 in \cite{H}.

Here, for the different winner/looser payoffs,  the matrix $H$ is always
$$
H=\left(\begin{array}{rrr} 0 & -1 & 1\\ 1& 0& -1\\ -1 & 1 & 0\end{array}\right),
$$
and in this case we have 
\begin{align*}
\frac{dp_1^{(i)}}{dt} = & p_1^{(i)} e_1 H \bar p = p_1^{(i)}(1-\bar p_1-2\bar p_2)+\bar p_1 (1- p_1^{(i)}-2 p_2^{(i)}), \\
\frac{dp_2^{(i)}}{dt} = & p_2^{(i)} e_2 H \bar p= -p_2^{(i)}(1-2\bar p_1-\bar p_2)-\bar p_2(1-2  p_1^{(i)}-  p_2^{(i)}),
\end{align*}
and by rescaling time, the equation for $\bar p$ 
 is given by
\begin{align*}
\frac{d\bar p_1}{dt} = & \bar p_1 (1-\bar p_1-2\bar p_2), \\
\frac{d\bar p_2}{dt} = & -\bar p_2(1-2\bar p_1- \bar p_2).
\end{align*} 
 Linearizing the previous systems
at  the Nash equilibria $(1/3, 1/3, 1/3)$,  we find that $p^{(i)}$ and $\bar p$ describe periodic orbits. So, the players strategies and the
mean strategy of the population fail to converge.

\bigskip
However, an striking point observed in the simulations was the convergence
$$ \bar p(t) \to (1/3, 1/3, 1/3)$$
as $t\to \infty$, see Figure 1. 

This seems to be a finite size effect, which can be understood by analyzing the system of equations given by \eqref{nordinaria}.
In this case, we get that $DF(0)\in \R^{2N\times 2N}$ is a block matrix, and computations with Octave, for different values of $N$, show
that it has purely imaginary eigenvalues.
Hence, the evolution of the mixed strategy of each agent follows a periodic orbit centered at $(1/3, 1/3, 1/3)$. However,
the random
perturbations of the orbits and the different initial positions generate a diffusive effect, and they are scattered around  the fixed point.
See the snapshots in Figure 2, where the  solid line correspond to the strategy of the mean strategy, which converges to $(1/3, 1/3, 1/3)$.


  \begin{figure}\label{fig1conver}
 \centering
   \includegraphics[angle=0,width=1\textwidth]{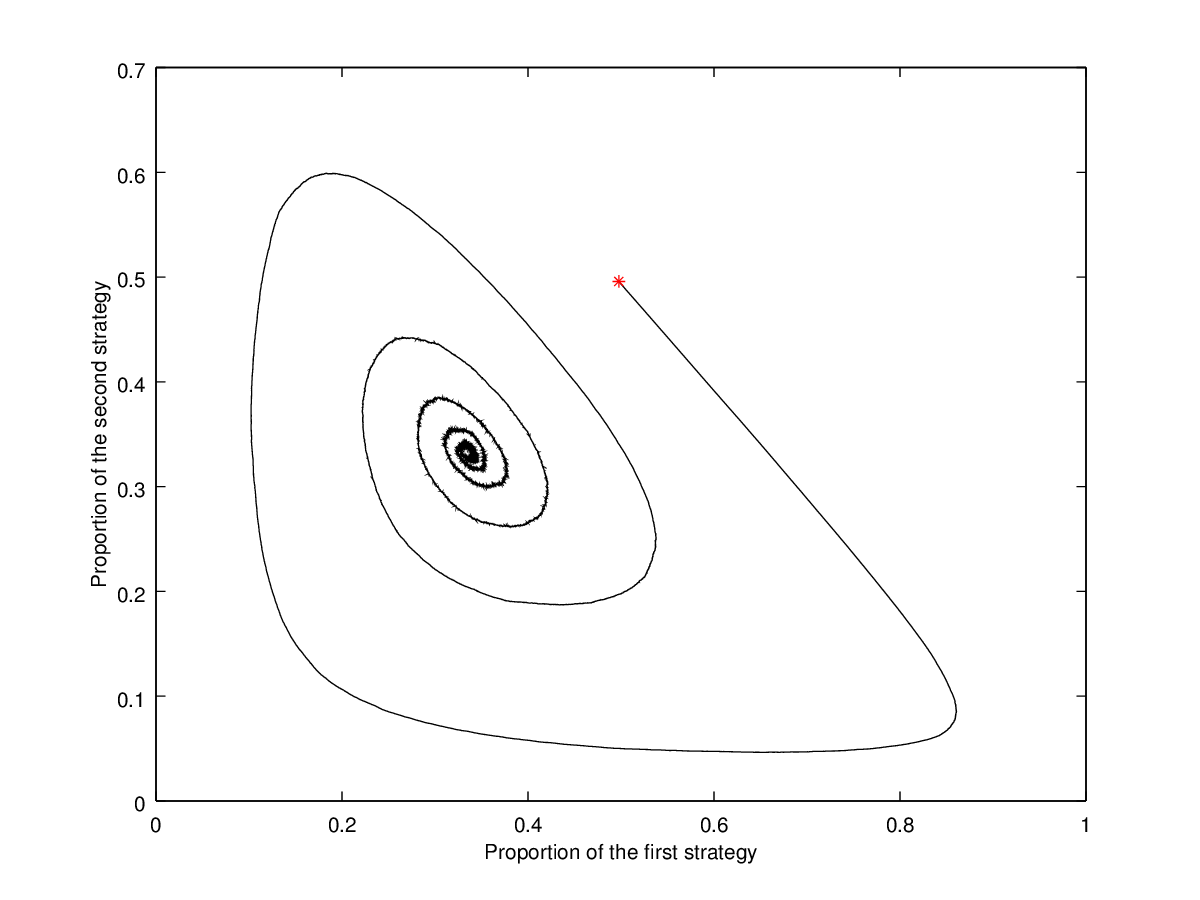}
 \caption{Evolution of $(\bar p_1, \bar p_2)$ for the rock-paper-scissors game. Initially,   $10^4$ players started near
 $(1/2,1/2,0)$.   In the simulation they performed $4 \times 10^6$ random matches with $\delta=0.05$.}
\end{figure}

\begin{figure}\label{fig2capturas}
\begin{center}
  \begin{minipage}{0.45\textwidth}
  \adjustbox{trim={.0\width} {0.2\height} {0.0\width} {0.25\height},clip}
    {\includegraphics[width=\textwidth]{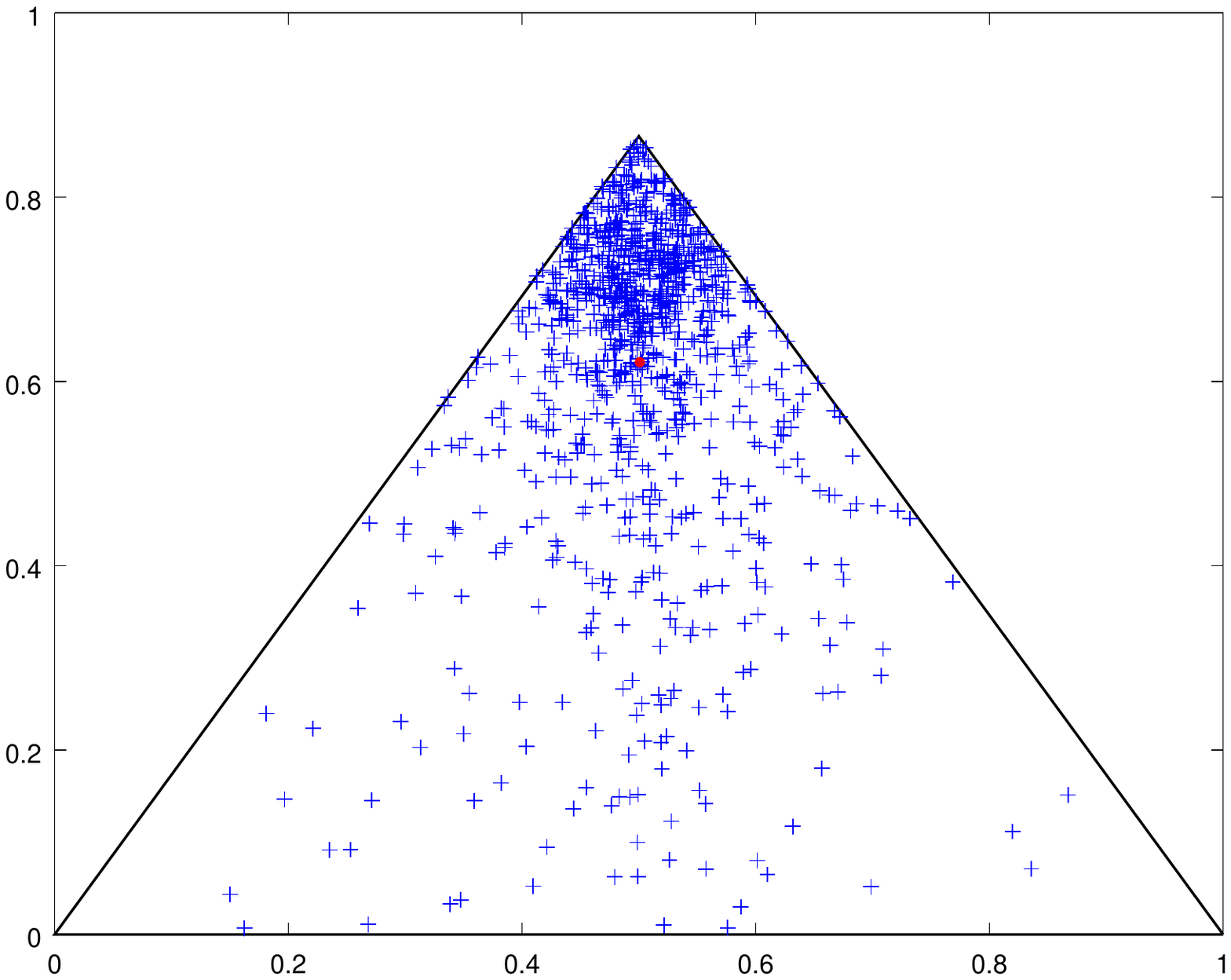}}
     \adjustbox{trim={.0\width} {0.25\height} {0.0\width} {0.2\height},clip}
    {\includegraphics[width=\textwidth]{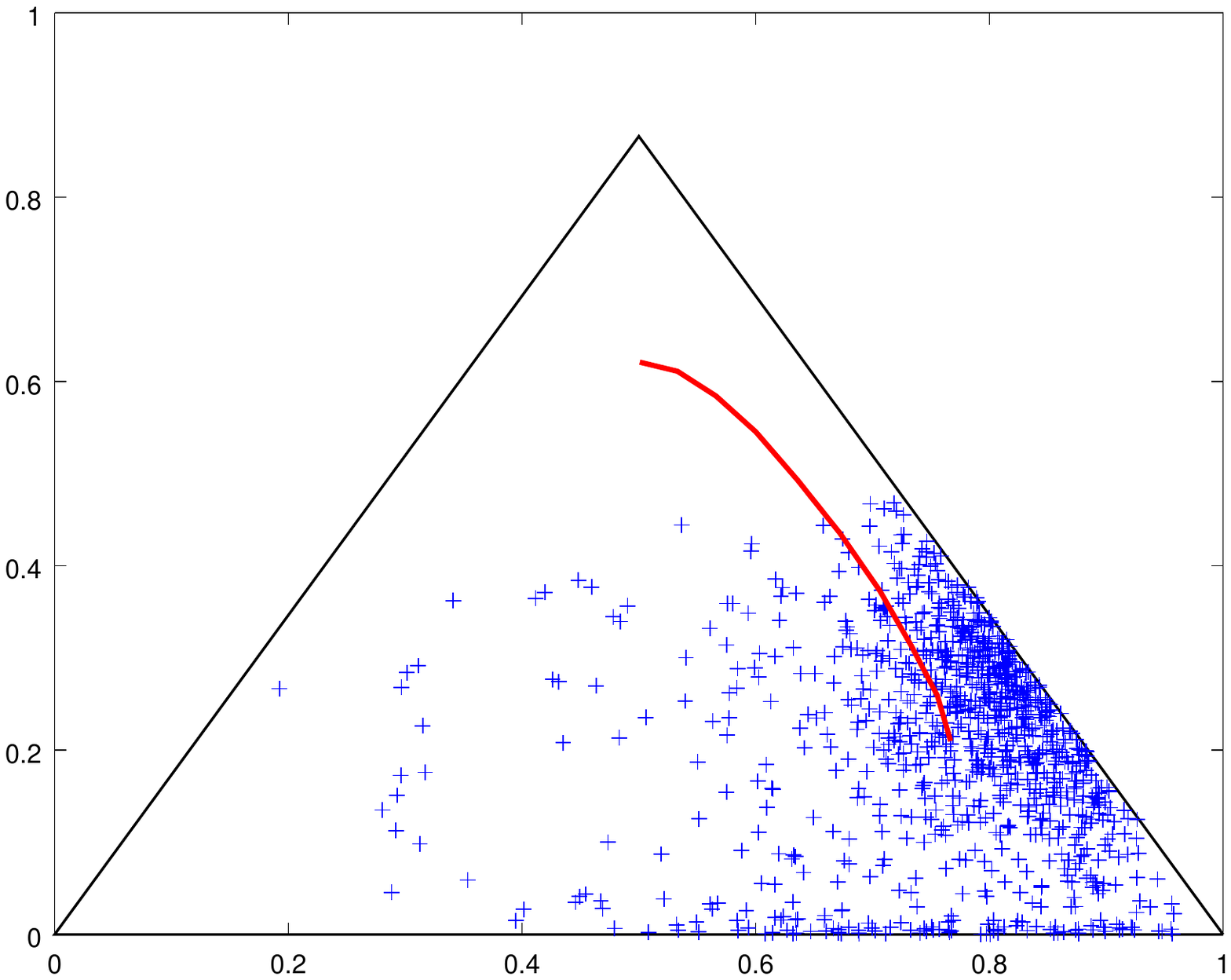}}
     \adjustbox{trim={.0\width} {0.25\height} {0.0\width} {0.2\height},clip}
    {\includegraphics[width=\textwidth]{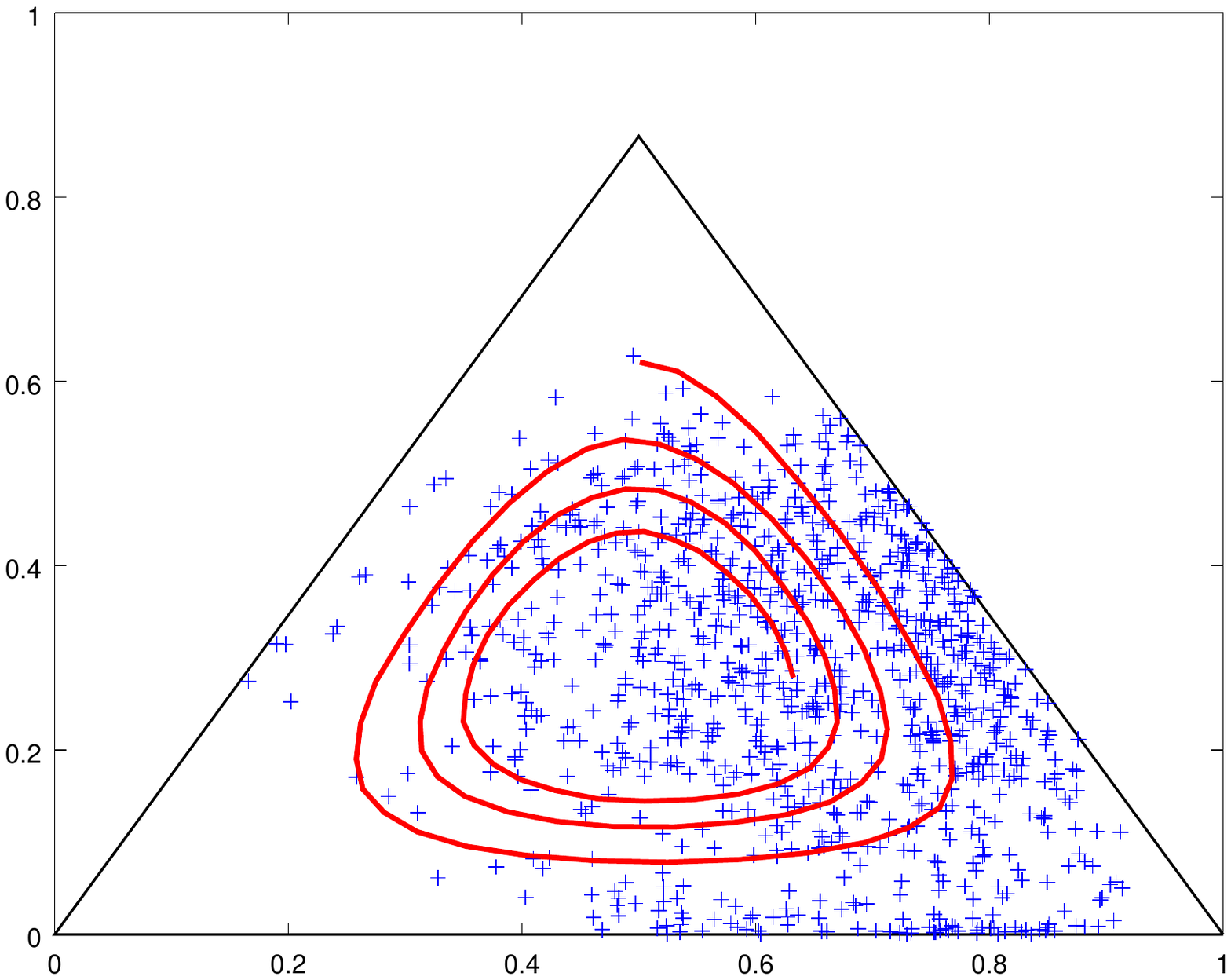}}
  \end{minipage}
  \hfill
  \begin{minipage}{0.45\textwidth}
   \adjustbox{trim={.0\width} {0.2\height} {0.0\width} {0.25\height},clip}
    {\includegraphics[width=\textwidth]{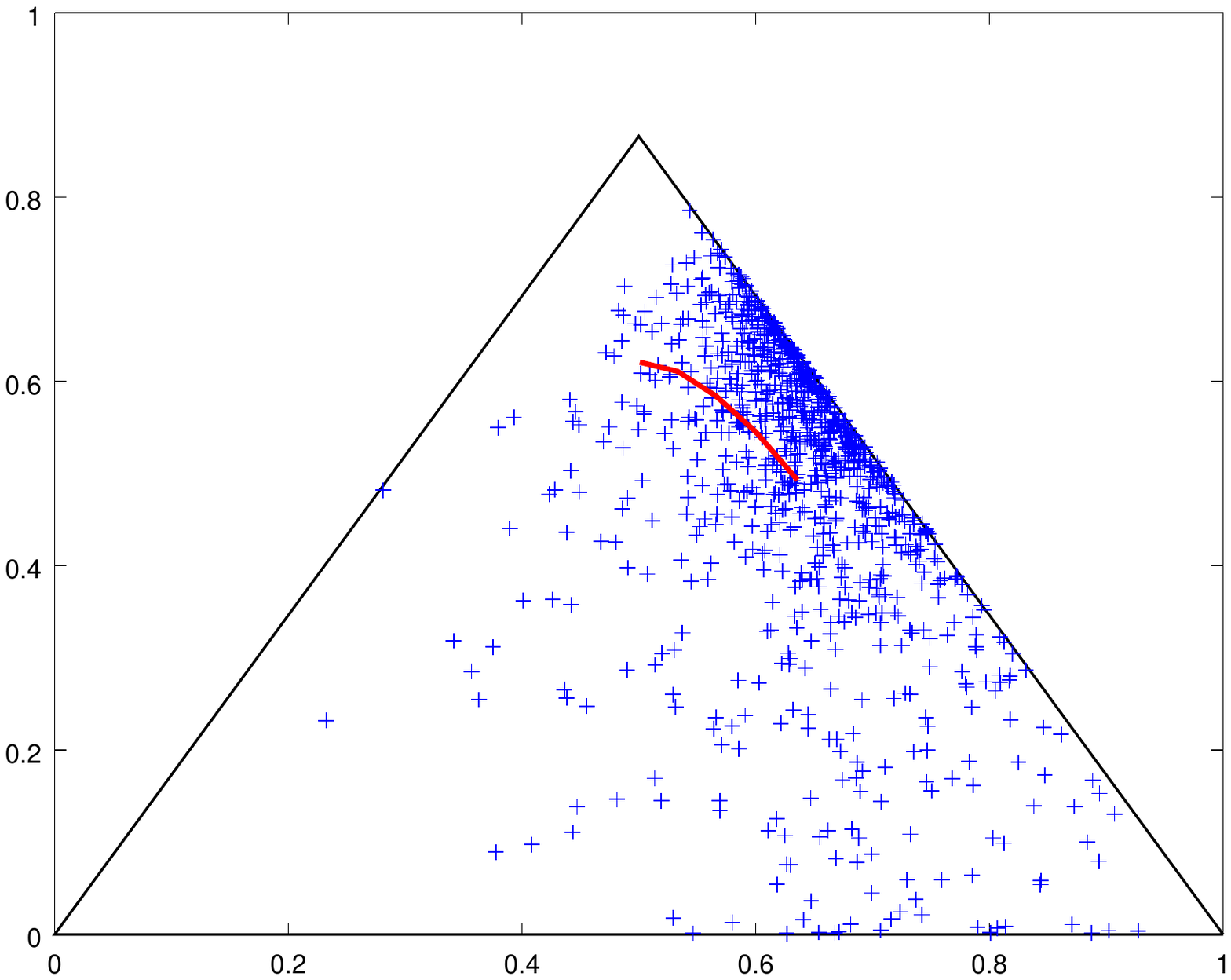}}
     \adjustbox{trim={.0\width} {0.25\height} {0.0\width} {0.2\height},clip}
    {\includegraphics[width=\textwidth]{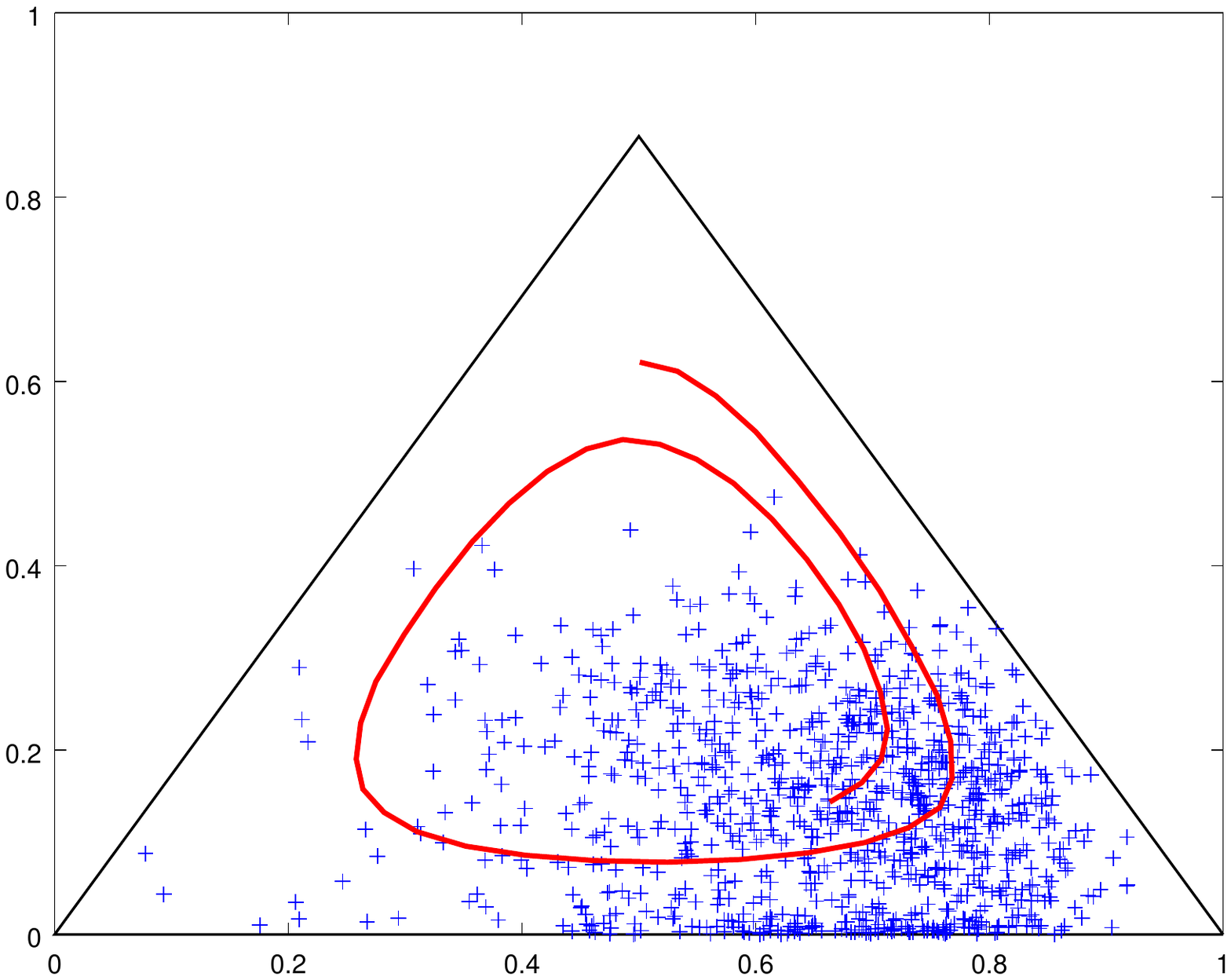}}
     \adjustbox{trim={.0\width} {0.25\height} {0.0\width} {0.2\height},clip}
    {\includegraphics[width=\textwidth]{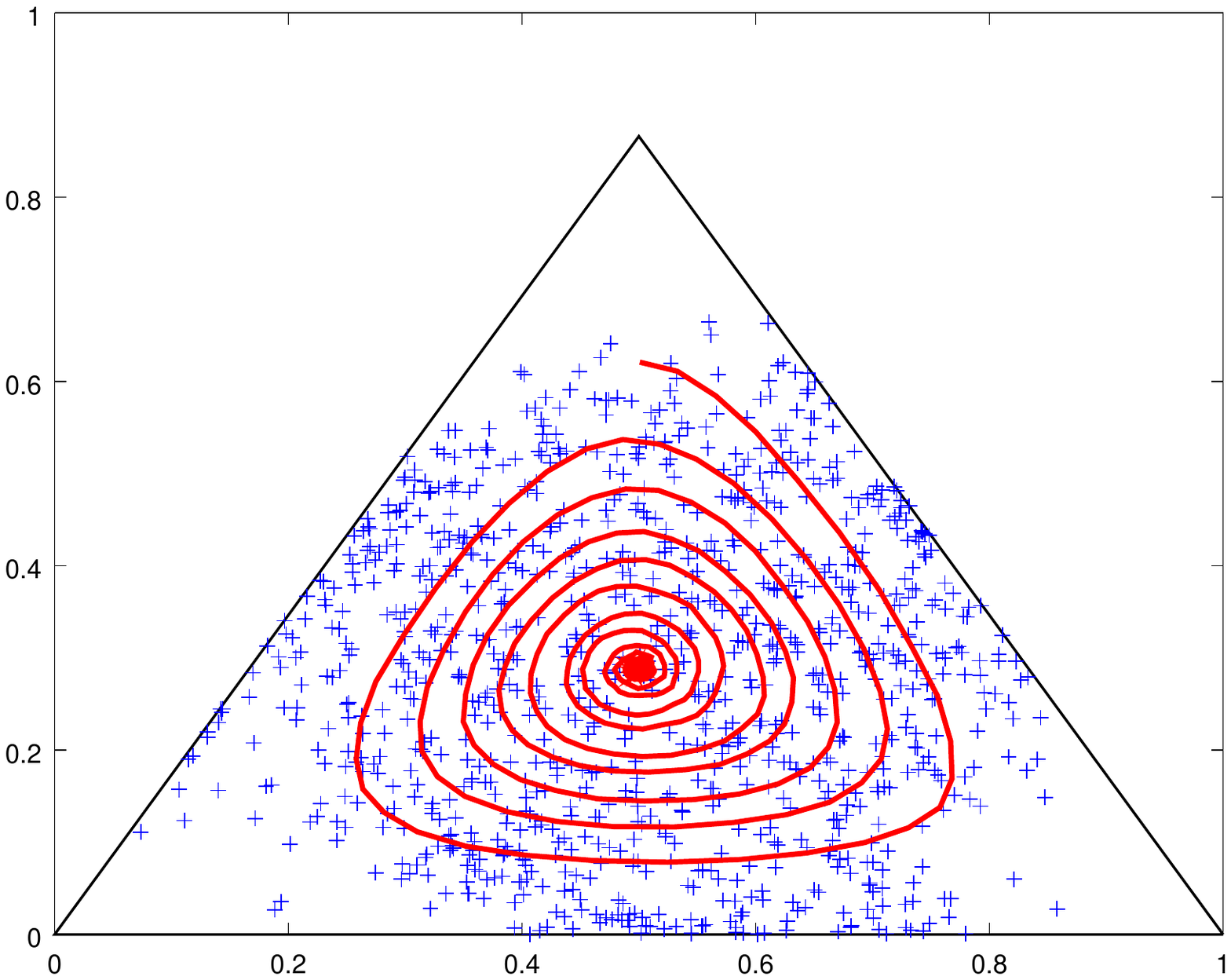}}
  \end{minipage}
  \caption{Diffusion of the players on the simplex of strategies. The first image is the initial distribution of the players, and
  the next ones are snapshots at steps 5, 10, 45, 100, and 1000. In each step, every  agent is sequentially selected
  and randomly  matched  with  another player.
   The solid line correspond to
  the trajectory of the mean strategy of the population. }
\end{center}
\end{figure}

\newpage

\section{Wealth distribution among the agents}

In this section we present the effect of the games on the wealth distribution curves. We start with  two strategies games and then we
consider
the rock-paper-scissors game.  Let us observe that in the former case, the players learn how to play the optimal pure strategy,
and then the wealth distribution frozen  since the payoff at equilibrium is zero; we are interested here in the initial advantage of a player who is close to the optimal strategy.
 In the latter
  case, the agents play a mixed strategy, and we compute the expected value of the variance
 due to the fluctuations of agents wealth; we find a remarkable agreement with the simulations.

Let us  introduce the family of probability  densities $\{P_t(w)\}_{t\ge 0}$,
and the probability to find an agent whose wealth belong to some interval
$(a,b)$ at time $t$  is given by
$$
\int_a^b P_t(w)dw.
$$

Now we  compute  the variance of $P_t(w)$ in order to compare it with our simulations. We have the following result:

\begin{thm}\label{varianza}
Let $G$ be the matrix of a symmetric zero sum game, and let us consider the wealth evolution process generated by the
  microscopic interaction rule \eqref{UpDatePayOff}.  Let us assume that the players are playing a mixed Nash equilibrium $p$,
  and that they are selected at random from the distribution $P_t$.
  Then, for sufficiently small $\ve$, we have
$$
\lim_{t\to \infty} \E(w^2)= \frac{\langle G^+\rangle}{ -\ve\langle G^{+ 2} \rangle +    \langle G^+\rangle},
$$
where
\begin{align*}
\langle G^\pm \rangle & = \ve^{-1} \iint pG_\ve^\pm p' \, P_t(w)P_t(w')dwdw' \\
 \langle G^{\pm 2} \rangle  & = \ve^{-2} \iint (pG_\ve^\pm p')^2 \, P_t(w)P_t(w')dwdw',
\end{align*}
\end{thm}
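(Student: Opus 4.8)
The plan is to derive and then solve the evolution equation for the second moment $\E(w^2)$ of the wealth distribution in the mean-field (kinetic) regime, using the hypothesis that the two interacting agents are independent samples from $P_t$, both playing the \emph{fixed} Nash strategy $p$ (so that the strategic dynamics of Section 3 are frozen and only the wealth moves). Mirroring the master-equation bookkeeping of Section 3, I would write the weak form of the Boltzmann-type equation for $P_t$ with test function $\phi(w)=w^2$: up to a positive collision-rate constant,
$$\frac{d}{dt}\E(w^2)=\frac12\iint \big\langle (w_i^*)^2+(w_j^*)^2-w^2-w'^2\big\rangle\,P_t(w)P_t(w')\,dw\,dw',$$
where $w=w_i$, $w'=w_j$ and $\langle\cdot\rangle$ averages over the pure strategies $s_l,s_m$ drawn according to $p$. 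From \eqref{UpDatePayOff} with $\Delta_\ve=\ve(g_{lm}^+w'-g_{lm}^-w)$ and the exact conservation $w_i^*+w_j^*=w+w'$, the microscopic increment factorizes cleanly as $(w_i^*)^2+(w_j^*)^2-w^2-w'^2=2\Delta_\ve(w-w')+2\Delta_\ve^2$.

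Next I would perform the strategy average. Two algebraic facts do all the work. Because $G$ is antisymmetric one has $g_{lm}^+=g_{ml}^-$, so $p^{T}G^+p=p^{T}G^-p=:A$ and hence $\langle\Delta_\ve\rangle=\ve A\,(w'-w)$; and because at most one of $g_{lm}^+,g_{lm}^-$ is nonzero we have $g_{lm}^+g_{lm}^-=0$, so the square carries no cross term and $\langle\Delta_\ve^2\rangle=\ve^2B\,(w^2+w'^2)$ with $B:=\sum_{l,m}p_l(g_{lm}^+)^2p_m$ (again $B$ is the same whether built from $G^+$ or $G^-$). Substituting these into the increment gives $-2\ve A\,(w-w')^2+2\ve^2B\,(w^2+w'^2)$, and integrating against $P_t(w)P_t(w')$—together with exact conservation of total wealth, which makes $m_1:=\E(w)$ constant (normalized to $1$)—collapses everything into a \emph{closed, linear} ODE
$$\frac{d}{dt}\E(w^2)=c\big[\ve A\,m_1^2-\ve(A-\ve B)\,\E(w^2)\big],\qquad c>0.$$
Since the strategy payoffs $pG_\ve^\pm p'$ do not depend on wealth, the $P_t$-integrations in the definitions reduce to constants, giving precisely $A=\langle G^+\rangle$ and $B=\langle G^{+2}\rangle$.

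Finally I would solve the ODE. For $\ve$ small enough that $A-\ve B>0$—which is exactly where the hypothesis ``$\ve$ sufficiently small'' enters, since $A,B>0$—the coefficient of $\E(w^2)$ is strictly negative, so the fixed point is attracting and $\E(w^2)$ converges as $t\to\infty$ to
$$\frac{\ve A\,m_1^2}{\ve(A-\ve B)}=\frac{A}{A-\ve B}=\frac{\langle G^+\rangle}{-\ve\langle G^{+2}\rangle+\langle G^+\rangle},$$
using $m_1=1$, which is the claimed identity.

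The main obstacle is conceptual rather than computational: it is the \emph{closure}. The second-moment balance threatens to couple $\E(w^2)$ to quantities not expressible through the moments of $P_t$, and the fact that it does not is entirely due to the two antisymmetry-driven identities above (the collapse $p^{T}G^+p=p^{T}G^-p=A$, and the orthogonality $g^+g^-=0$ that kills the cross term in $\Delta_\ve^2$). I would also need to make explicit the molecular-chaos assumption implicit in the weak form—namely that the interacting pair are independent draws from $P_t$—which the theorem's hypothesis ``selected at random from the distribution $P_t$'' grants, and to note that stability of the stationary second moment, i.e. genuine convergence rather than blow-up, is precisely what the smallness of $\ve$ secures.
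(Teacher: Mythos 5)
Your proposal is correct and follows essentially the same route as the paper: derive a closed linear ODE for $\E(w^2)$ under the molecular-chaos assumption, using the antisymmetry identities $p^TG^+p=p^TG^-p$ and $g_{lm}^+g_{lm}^-=0$ to close the moment equation, then read off the attracting fixed point for $\ve$ small. The only (cosmetic) difference is that you use the symmetric weak form via $(w_i^*)^2+(w_j^*)^2-w^2-w'^2=2\Delta_\ve(w-w')+2\Delta_\ve^2$, whereas the paper expands the change of a single agent's squared wealth directly; both yield the identical ODE and limit.
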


\begin{proof}
Using the interaction rule, and the distribution of players we have
\begin{align*}
  \frac{d}{dt} \E( w^2 ) = &\frac{d}{dt}\int w^2 P_t(w)dw \\
 =  &  \iint \big[ (w+pG_\ve^+p' w'-pG_\ve^-p' w)^2 -w^2 \big]  P_t(w)P_t(w')dwdw'
 \\
 =  &  \iint \big[ (pG_\ve^+p')^2 (w')^2  +2ww'(pG_\ve^+p') \big]  P_t(w)P_t(w')dwdw'
 \\
 & +   \iint \big[   (pG_\ve^-p')^2  -2(pG_\ve^-p') \big] (w)^2 P_t(w)P_t(w')dwdw' \\
   &  -  \iint   (pG_\ve^+p')(pG_\ve^-p')  w w' P_t(w)P_t(w')dwdw' .
\end{align*}

We use now that $\E(w^2)= \E(w^{' 2})$, $\E(w)=\E(w')=1$ and that $$ \iint(pG_\ve^+p')(pG_\ve^-p') w w' P_t(w)P_t(w')dwdw'=0,$$ since for each
pure strategy, one of the two games is equal to zero.
 Therefore,
\begin{align*}
  \frac{d}{dt} \E( w^2 )
 =  & \ve^2\langle G^{+ 2} \rangle   \E( w^2 )  +2 \ve\langle G^+\rangle
 \\
 & +  \ve^2\langle G^{- 2} \rangle   \E( w^2 )    -2 \ve\langle G^-\rangle   \E( w^2 )
   \end{align*}

We observe that $u(t) = \E(w^2)$ is a solution of the ordinary differential equation
$$
\frac{d}{dt}u(t) = A u(t) +B,
$$
for $A= \ve^2\langle G^{+ 2} \rangle +  \ve^2\langle G^{- 2} \rangle-2 \ve\langle G^-\rangle$, and $B=2 \ve\langle G^+\rangle$,
with some initial condition $u_0=\E(w_0^2)$ depending on the initial distribution of wealth.
The explicit solution is
$$
\E(w^2 )=C(A,u_0)exp(At) - A^{-1}B,
$$
and
$$
\lim_{t\to \infty} \E(w^2)= \frac{-2\langle G^+\rangle}{\ve\langle G^{+ 2} \rangle +  \ve\langle G^{- 2} \rangle-2 \langle G^-\rangle}
$$
since $A<0$ for $\ve$ small enough.

The result follows since $ \langle G^+\rangle= \langle G^-\rangle$ and $\langle G^{+ 2} \rangle=\langle G^{- 2} \rangle$
due to the symmetry of the game, and the proof is finished.
\end{proof}

Let us remark that the previous proof holds in the limit of infinitely many players. Of course, if the number of players is finite, we can understand
$P_t$ as a sum of Dirac's delta functions.

\subsection{Simulations for $2\times 2$ games}

\,

We have considered a symmetric game with two strategies $s_1$ and $s_2$, and the pay-off
matrix $G$   has the  form
$$ G = \begin{pmatrix}
0 & \ve \\
-\ve & 0
\end{pmatrix} $$
for $\ve\in (0,1)$.

As was proved in Subsection \ref{subdospordos}, we observe that  all the players learn how
  to play the optimal strategy. As a consequence, the  wealth distribution reaches an equilibrium which  is highly dependent on the
duration of  the transient state while the agents learn the optimal way of play, and then is frozen. So, it is interesting to study
the effect of the strategy update step $\delta$ on the steady state.

In the following figures we  have run Algorithm 1 with $10^4$ agents, each agent with wealth $1$, and $\ve=0.1$. We have fixed the vector of initial strategies for each
player and we have performed 100 simulations starting with these initial strategies.
In Figure 
\ref{finalwealth2por2} we observe the mean  steady state curve for these simulations, for $\delta=0.1$ (left),
$\delta=0.01$ (center), and $\delta=0.001$ (right).

\begin{figure}
 \centering
   \includegraphics[angle=0,width=.32\textwidth]{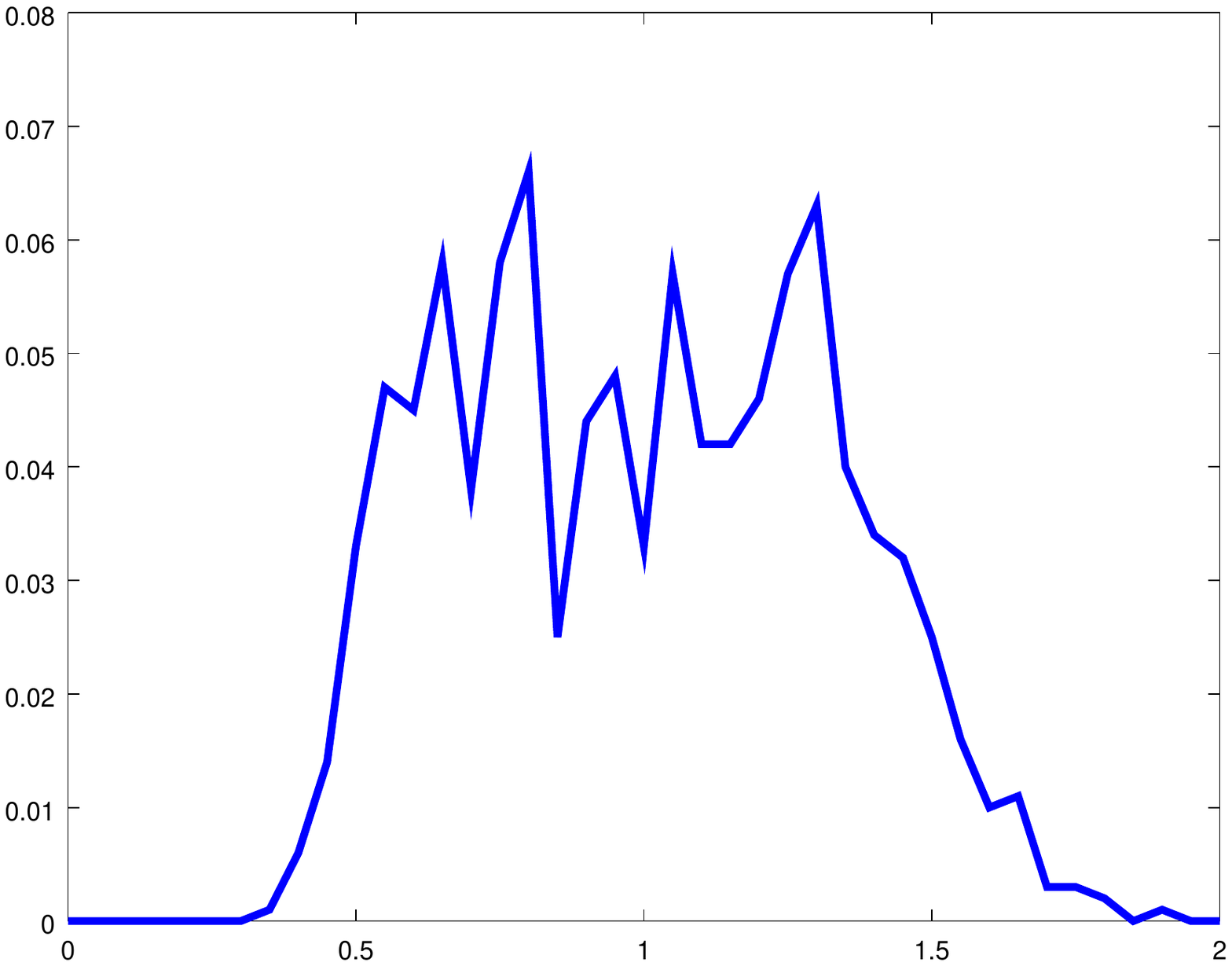}
   \includegraphics[angle=0,width=.32\textwidth]{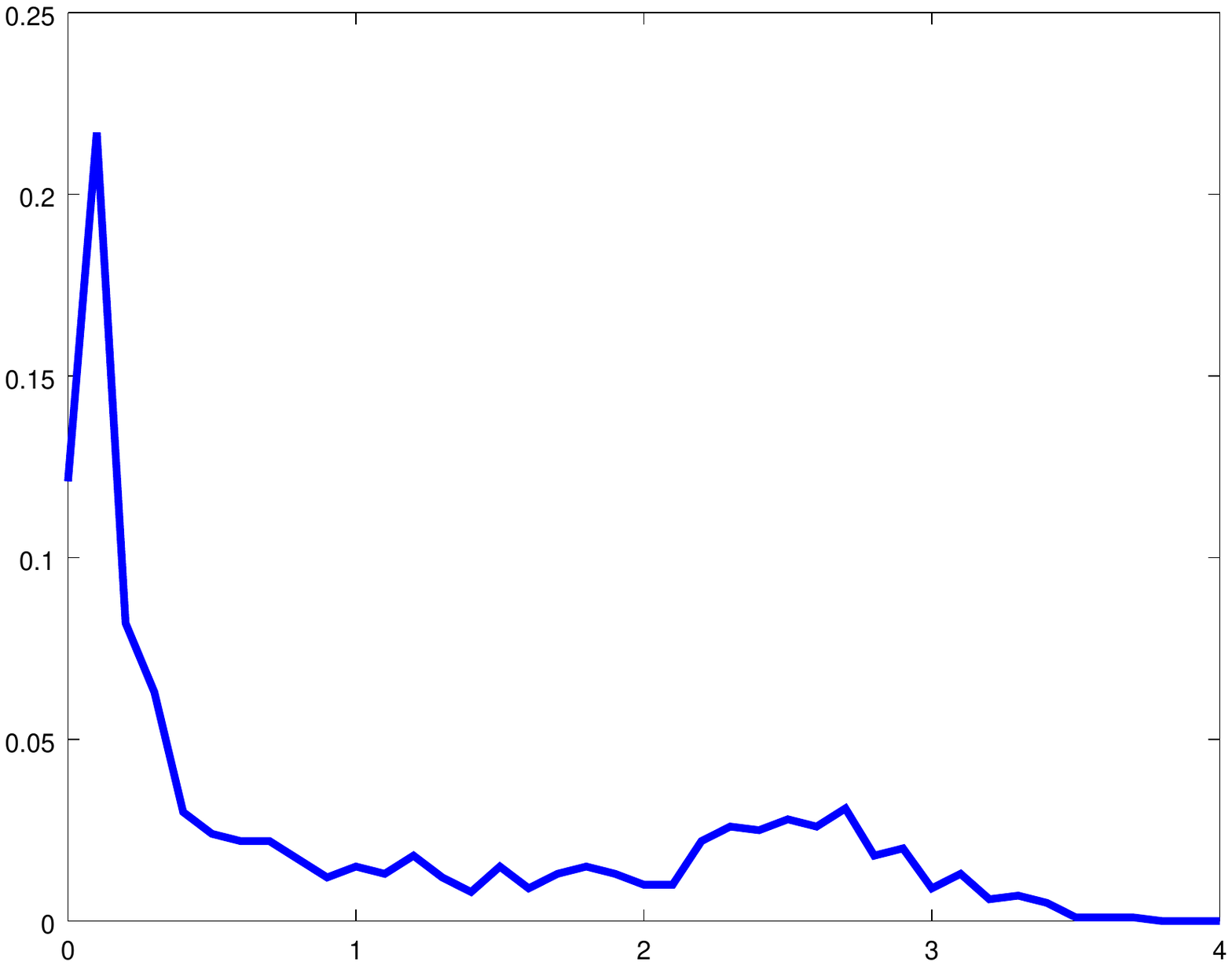}
   \includegraphics[angle=0,width=.32\textwidth]{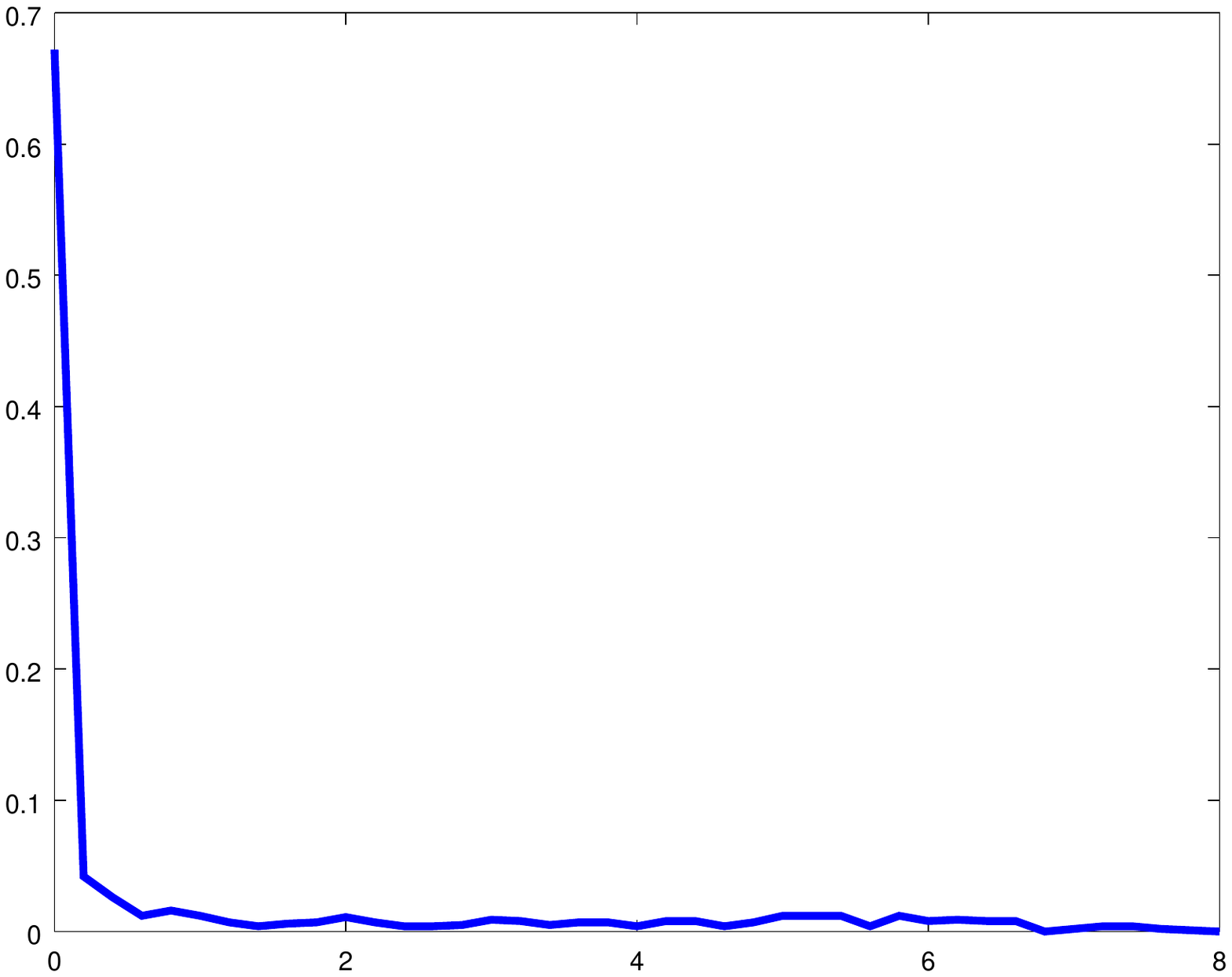}
 \caption{Steady state of the wealth for different values of $\delta$, $\delta=0.1$ (left),
$\delta=0.01$ (center), and $\delta=0.001$ (right). We averaged 100 simulations with $N=10^4$ agents and the same initial distribution
of strategies.}\label{finalwealth2por2}
\end{figure}

In Figure \ref{wealstratinitial2por2} we show the mean wealth obtained at the steady state
for the agents as a function of their initial strategy. The horizontal axis represent the probability
that an agent play the optimal strategy, and clearly the ones which started with a high probability to play it become more wealthier.
Here, the value of $\delta$ determines the duration of the transient state, where the wealth exchange takes place. We have $\delta=0.1$ (left),
$\delta=0.01$ (center), and $\delta=0.001$ (right).

\begin{figure}
 \centering
   \includegraphics[angle=0,width=.32\textwidth]{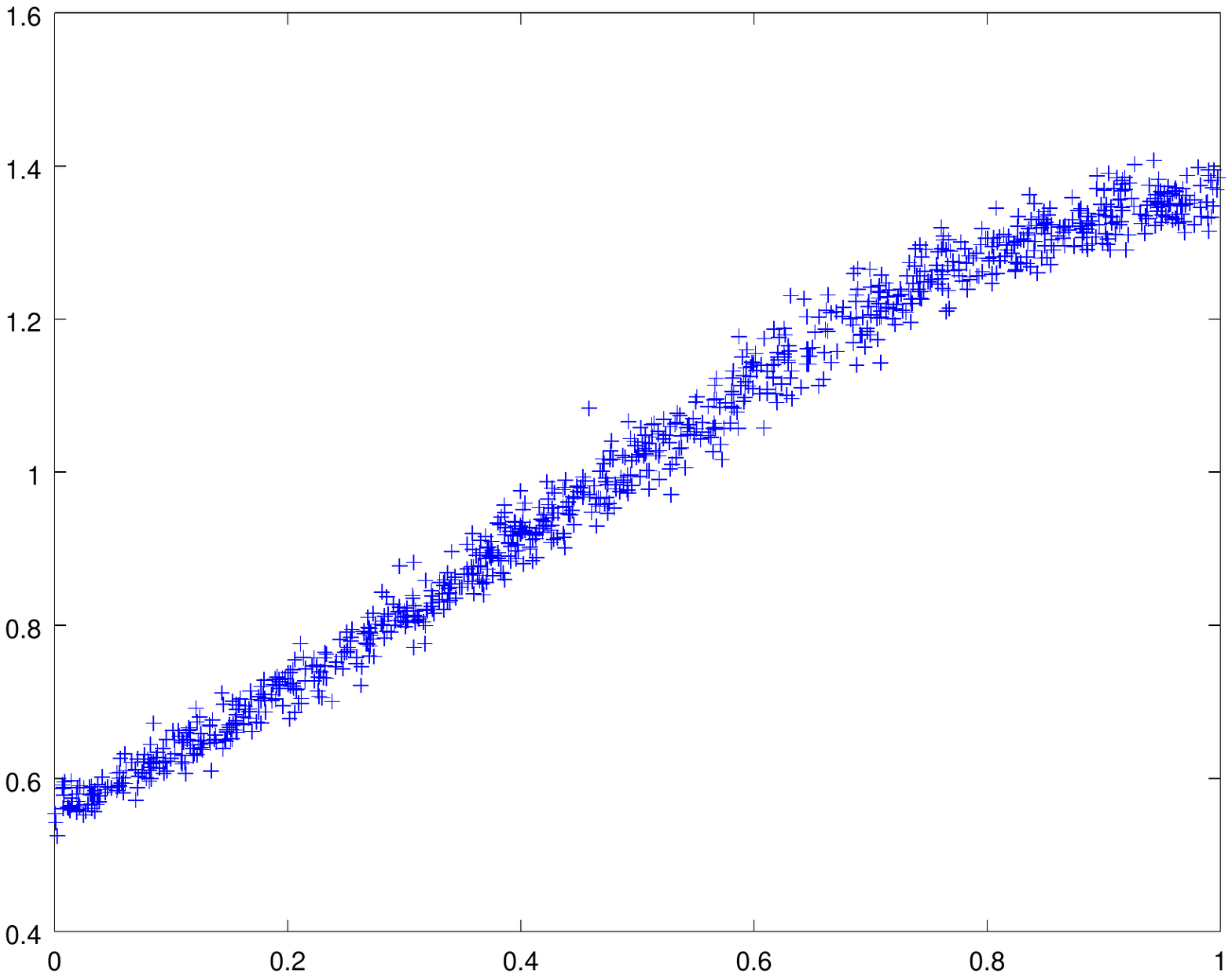}
   \includegraphics[angle=0,width=.32\textwidth]{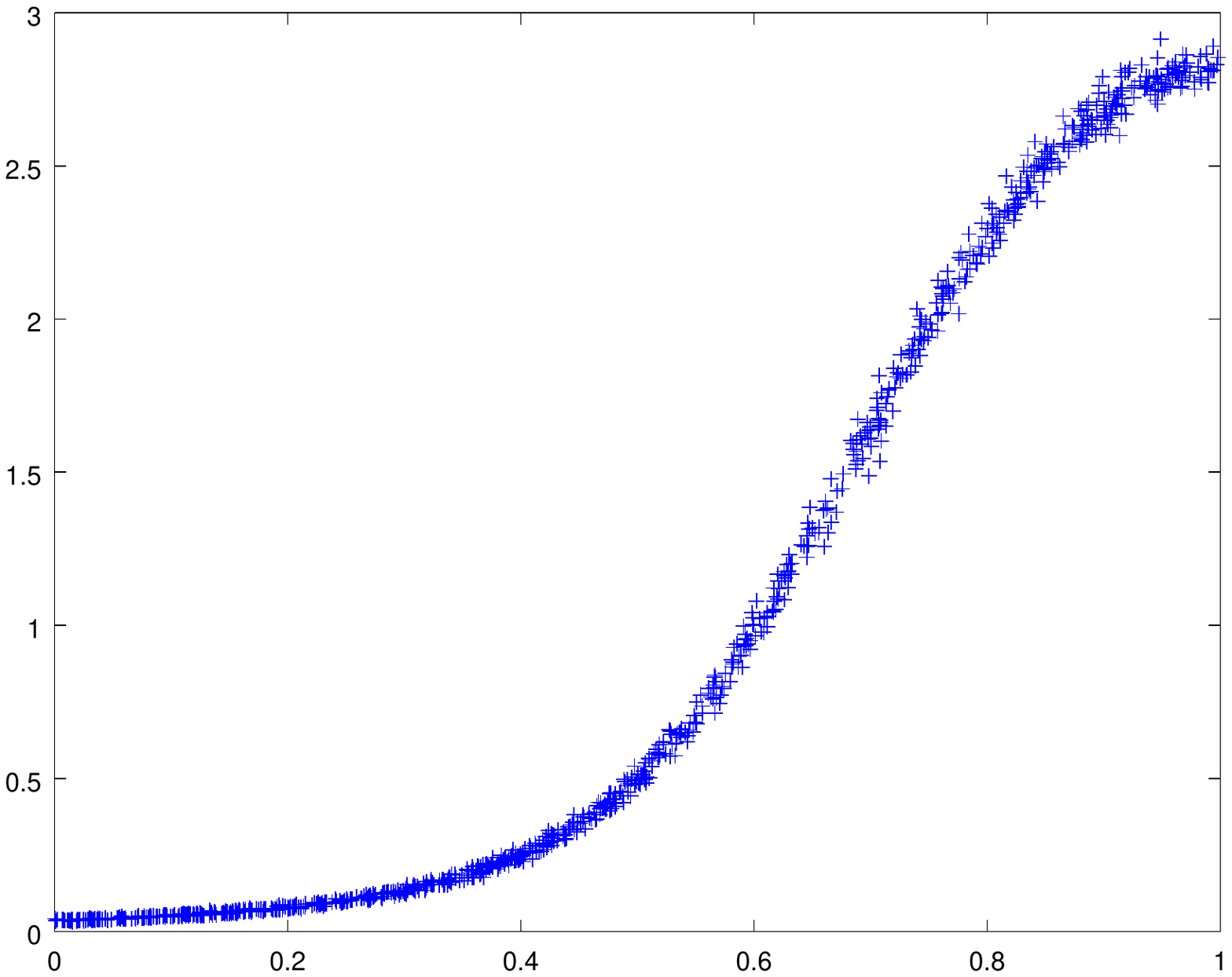}
   \includegraphics[angle=0,width=.32\textwidth]{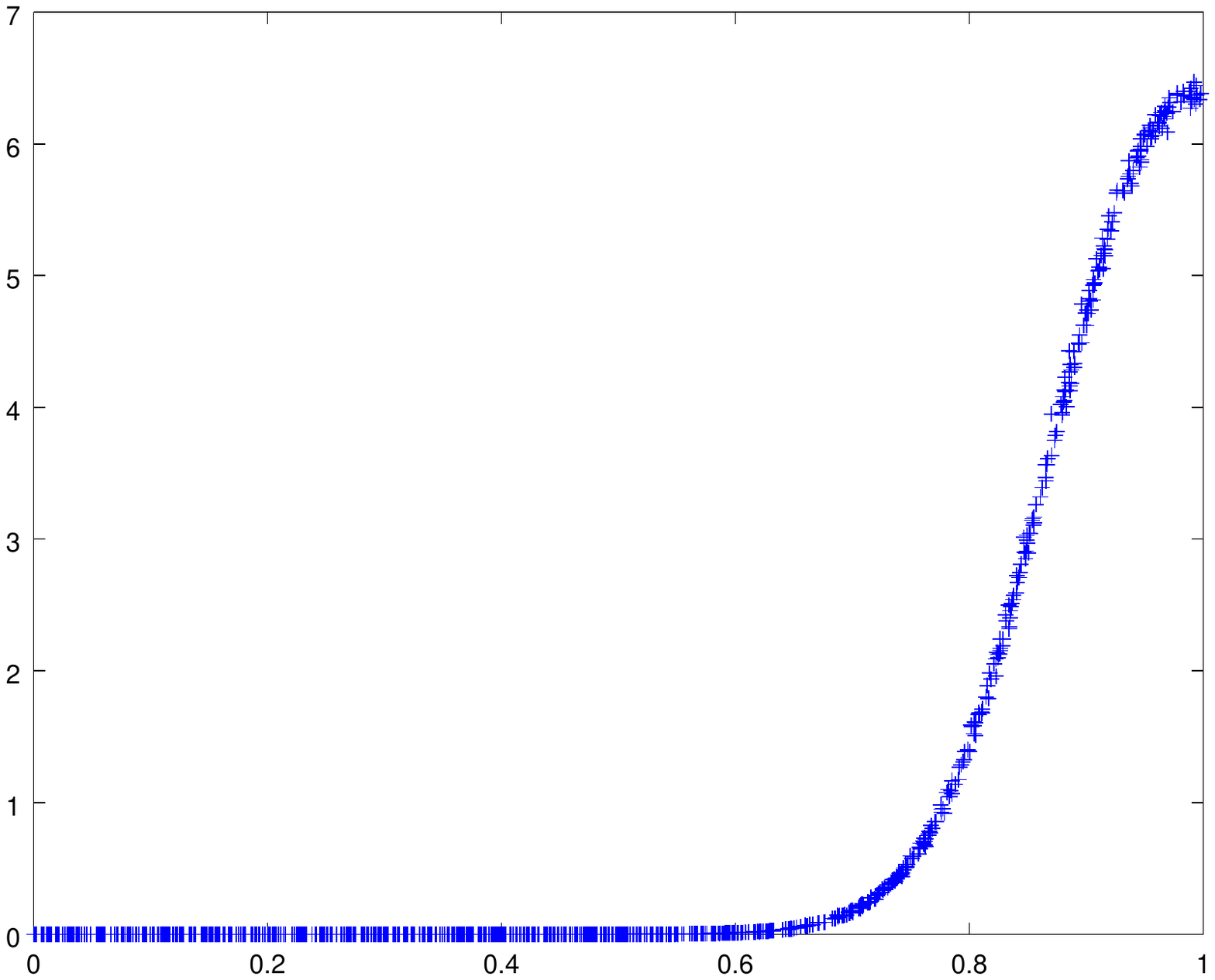}
 \caption{Mean wealth of agents (vertical axis) depending on their initial probability to play the optimal strategy (horizontal axis), for different
 values of $\delta$, $\delta=0.1$ (left),
$\delta=0.01$ (center), and $\delta=0.001$ (right). We averaged 100 simulations with $N=10^4$ agents and the same initial distribution
of strategies.}\label{wealstratinitial2por2} 
\end{figure}

\subsection{The Rock-paper-scissors game}

\,

Let us consider now the Rock-paper-scissors game from Subsection \ref{rps}. We fix  $a=b=1$, and we scale the game with different parameters
$\ve\in (0,1)$.

 In Figure \ref{g1} we show several histograms, for different values of $\ve$, of the wealth distribution for $N=10^4$ agents, averaged over $10$ realizations
starting all of them with the same initial strategies, and each agent has initial wealth equal to $1$. Let us note that we have less equalitarian
distributions when $\ve$ approaches $1$. In fact, for $\ve=1$, all the wealth is transferred to the winner player, and in the long run, it accumulates in
a single player.

The solid lines in Figure \ref{g1} represent the Gamma function
$$
 \Gamma(x) = \frac{v^v\exp(-vx)}{\Gamma(v)x^{1-v}},$$
where $v=v(\ve)$ is the  sample variance of the data in each case. We show in Table \ref{varioseps} that the sample variance and
the theoretical value $\ve (1-\ve)^{-1}$ are very close.

The value $v(\ve)=\ve (1-\ve)^{-1}$ can be computed from Theorem \ref{varianza}, or using directly  the interaction rule, from
$$
w_i^* = w_i +  \ve(g_{lm}^+w_j - g_{lm}^-w_i).
$$

 Let us note that the expected value of
$w_i$ is $\E(w_i)=1$, hence the variance is
$$Var(w_i)= \E(w_i^2)-1.$$

Let us call $V=\E(w_i^2)$. The linearity of the expected value, and the independence implies
\begin{align*}
 \E([w_i^*]^2)
= & \E([w_i +  \ve g_{lm}^+w_j - \ve g_{lm}^-w_i]^2) \\
 = & \E([w_i]^2) + \E([\ve g_{lm}^+w_j]^2) +\E([\ve g_{lm}^-w_i]^2) +
 2 \E(\ve g_{lm}^+w_j w_i)  \\ 
 & \quad - 2\E(\ve  g_{lm}^-w_i^2)
 -2 \E( \ve^2 g_{lm}^+ g_{lm}^-w_j w_i) .\end{align*}
Therefore,
$$
 V=  V + \ve^2 \E([g_{lm}^+]^2)V +\ve^2 \E([g_{lm}^-]^2)V +
\ve 2 \E(g_{lm}^+)  - 2\ve \E( g_{lm}^-)V  -\ve^2 2 \E(  g_{lm}^+ g_{lm}^-),
$$
which gives
$$ V =\frac{  2 \E(g_{lm}^+) - \ve 2 \E( g_{lm}^+ g_{lm}^-)  }{2 \E( g_{lm}^-)
 - \ve \E([g_{lm}^+]^2)
-\ve \E([g_{lm}^-]^2)   }
$$

Using that
$$ \E(g_{lm}^+)=\E(g_{lm}^-)=  \E([g_{lm}^-]^2) =  \E([g_{lm}^+]^2)=\frac{1}{3},$$
and  $\E( g_{lm}^+ g_{lm}^-)= 0$ since one of them must be zero, we get
$$
 V =\frac{  \frac{2}{3}   }{ \frac{2}{3}
 - \frac{2\ve}{3}   } = \frac{1}{1-\ve}
$$
and thus
$$
Var(w_i) = \frac{\ve}{1-\ve},
$$
as stated before.

\begin{figure}
\begin{center}
  \begin{minipage}{0.45\textwidth}
    {\includegraphics[width=\textwidth]{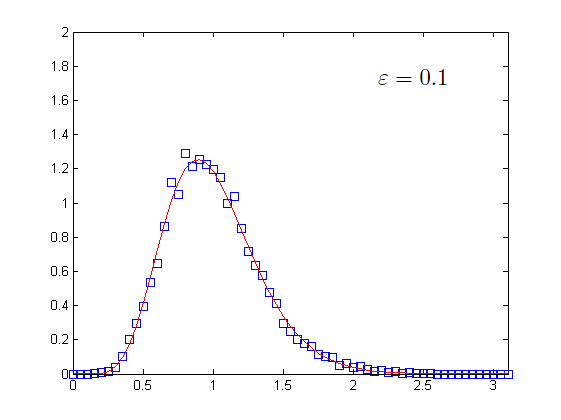}}
    {\includegraphics[width=\textwidth]{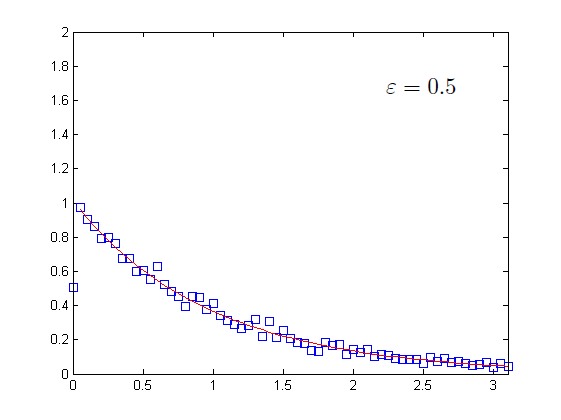}}
  \end{minipage}
  \hfill
  \begin{minipage}{0.45\textwidth}
    {\includegraphics[width=\textwidth]{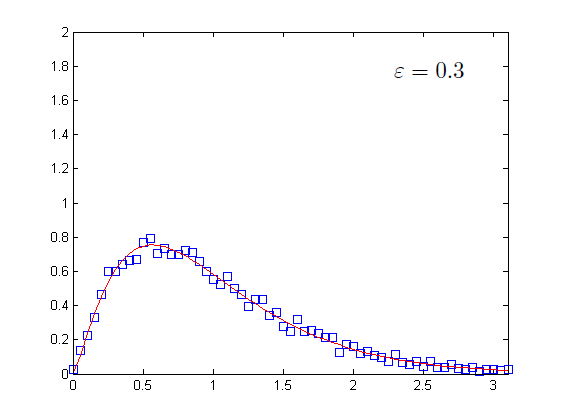}}
    {\includegraphics[width=\textwidth]{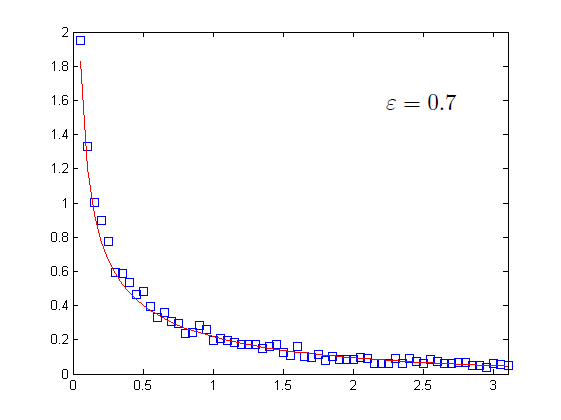}}
  \end{minipage}
  \caption{Histogram of the wealth distribution with $N=10^4$ agents, averaged over $10$ realizations, starting each one with wealth equal to $1$.
The solid line represent the Gamma function of parameter $v=v(\ve)$.}\label{g1} 
\end{center}
\end{figure}

  \begin{table}
  \begin{center}
    \begin{tabular}{|c|c|c|c|}
     \hline
       Value of $\ve$ & Sample variance  & Theoretical value & Relative error \\ \hline
       0.1 & 0.1114 & 0.1111 & 0.0030 \\ \hline
       0.3 & 0.4365 & 0.4286 & 0.0185 \\ \hline
       0.5 & 1.0098 & 1.0000 & 0.0098 \\ \hline
       0.7 & 2.3635 & 2.3333 & 0.0129 \\ \hline
       0.9 & 9.0967 & 9.0000 & 0.0107  \\ \hline
     \end{tabular}\vspace{0.3cm}
     \caption{Variance of Gamma distribution depending of the value of $\ve$.}\label{varioseps} 
  \end{center}
\end{table}

\section{Conclusions}

In this work we have studied the wealth distribution problem when the agents interact
through a zero sum game, which determine the wealth transfer. Moreover, we have studied
the evolutive process, for agents using mixed strategies.

 We have proposed a Pavlovian type of agents, which update their strategies myopically as a result of the
 last game they played, without trying to find the best response, or changing strategies proportionally to looses and gains,
 but reinforcing the winning strategy and penalizing the loosing strategy. A theoretical analysis of this simple dynamic gives a
 system of ordinary differential equations for the evolution of each agent strategies, and the mean strategy
 of the population solves a system very close to the
 replicator equation.

The wealth evolution of the population is similar to the one observed in more simple models
where the interaction between agents is restricted to a random exchange of wealth. By
scaling the game payoff we reproduce the saving propensity phenomena, and different
Gamma-like curves appears as the steady state as a function of the scaling parameter.

Two different scenarios appear depending on the Nash equilibrium of the game. Since the
game is symmetric, its value is zero, and when the Nash equilibrium is a pure strategy, the
wealth transfer stops whenever the agents learn how to play in the optimal way.  In this case
we have more equalitarian distribution of wealth when they quickly update their strategies,
and less equalitarian distribution when the evolutionary process is slow.

However, when the Nash equilibrium is a mixed strategy, the wealth  transfer continues due
to the random results of each individual game, and the steady state for the wealth
distribution resembles the curves obtained using purely random ways to determine the
wealth transfer. The shape of the steady state varies with the proportion of the agent's
wealth that is involved in each transaction, and more equalitarian distributions are obtained
when this proportion is small, and if we allow to exchange almost all the wealth in a single
game, we get a distribution strongly concentrated near zero, since wealth is accumulated in
few individuals.

\bigskip

Finally, two directions seem promising. In the limit of infinitely many players we can expect a
system of two coupled Boltzmann equations, one describing the  evolutionary process on the
simplex of strategies, and the other one modeling the wealth exchange.

On the other hand, new situations can occur whenever the agents play an arbitrary game.
Beside the dichotomy between pure and mixed equilibria, we can have more than one
equilibria, with different payoffs, and different kind of wealth distribution can appear
depending on the limit equilibria selected by the dynamics. Let us remak that in this case the
wealth is not conserved, and some kind of rescaling will be needed in order to study the
asymptotic behavior of the wealth distribution.

\section*{Acknowledgments}

This paper was partially supported by grants UBACyT 20020130100283BA,
CONICET PIP 11220150100032CO and ANPCyT PICT 2012-0153.
J. P. Pinasco and N. Saintier members of CONICET. M. Rodriguez Cartabia is a  Fellow of CONICET.

\bigskip
\hfill\break
 Juan Pablo Pinasco \hfill\break
Departamento de Matem\'atica, FCEyN - Universidad de Buenos Aires \hfill\break and  IMAS,
CONICET-UBA \hfill\break    Ciudad Universitaria, Pabell\'on I (1428) Av. Cantilo s/n.
\hfill\break   Buenos Aires, Argentina. \hfill\break e-mail:{\tt jpinasco@dm.uba.ar}

\bigskip
\hfill\break
 Mauro Rodriguez Cartabia \hfill\break
Departamento de Matem\'atica, FCEyN - Universidad de Buenos Aires  \hfill\break
 and  IMAS,  CONICET-UBA  \hfill\break  Ciudad Universitaria, Pabell\'on I (1428) Av.
Cantilo s/n. \hfill\break  Buenos Aires, Argentina.\hfill\break e-mail:{\tt
mrodriguezcartabia@gmail.com}

\bigskip
\hfill\break
 Nicolas Saintier  \hfill\break
Departamento de Matem\'atica, FCEyN - Universidad de Buenos Aires   \hfill\break and IMAS,
CONICET-UBA  \hfill\break   Ciudad Universitaria, Pabell\'on I (1428) Av. Cantilo s/n.
\hfill\break   Buenos Aires, Argentina.\hfill\break e-mail:{\tt nsaintie@dm.uba.ar}

\end{document}